\newtheorem{lemma}{Lemma}
\newtheorem{theorem}{Theorem}
\newtheorem{corollary}{Corollary}
\title{Parameterized Integer Quadratic Programming: Variables and Coefficients}
\author{Daniel Lokshtanov\thanks{Department of Informatics, University of Bergen, Norway, e-mail: \texttt{daniello@ii.uib.no}}}
\begin{document}

\maketitle

\begin{abstract}
In the {\sc Integer Quadratic Programming} problem input is an $n \times n$ integer matrix $Q$, an $m \times n$ integer matrix $A$ and an $m$-dimensional integer vector $b$. The task is to find a vector $x \in \mathbb{Z}^n$ minimizing $x^TQx$, subject to $Ax \leq b$. We give a fixed parameter tractable algorithm for {\sc Integer Quadratic Programming} parameterized by $n + \alpha$. Here $\alpha$ is the largest absolute value of an entry of $Q$ and $A$. As an application of our main result we show that {\sc Optimal Linear Arrangement} is fixed parameter tractable parameterized by the size of the smallest vertex cover of the input graph. This resolves an open problem from the recent monograph by Downey and Fellows.
\end{abstract}
 
 
\section{Introduction}
While {\sc Linear Programming} is famously polynomial time solvable~\cite{khachiyan1980polynomial}, most generalizations are not. In particular, requiring the variables to take integer values gives us the {\sc Integer Linear Programming} problem, which is easily seen to be NP-hard. On the other hand, integer linear programs (ILPs) with few variables can be solved efficiently. The celebrated algorithm of Lenstra~\cite{Lenstra83} solves ILPs with $n$ variables in time $f(n)L^{O(1)}$ where $f$ is a (doubly exponential) function depending only on $n$ and $L$ is the total number of bits required to encode the input integer linear program. In terms of parameterized complexity this means that  {\sc Integer Linear Programming} is {\em fixed parameter tractable} (FPT) when parameterized by the number $n$ of variables to the input ILP. In parameterized complexity input instances come with a {\em parameter} $k$, and an algorithm is called fixed parameter tractable if it solves instances of size $L$ with parameter $k$ in time $f(k)L^{O(1)}$ for some function $f$ depending only on $k$. For an introduction to parameterized complexity we refer to the recent monograph of Downey and Fellows~\cite{DowneyF13}, as well as the textbook by Cygan et al.~\cite{the-awesome-book}.

Following the algorithm of Lenstra~\cite{Lenstra83} there has been a significant amount of research into parameterized algorithms for  {\sc Integer Linear Programming}, as well as generalizations of the problem to (quasi) convex optimization. Highlights include the algorithms for {\sc Integer Linear Programming} with improved dependence on $n$ by Kannan~\cite{Kannan}, Clarkson~\cite{Clarkson95} and Frank and Tardos~\cite{FrankT87} and generalizations to $N$-fold integer programming due to Hemmecke et al.~\cite{HemmeckeOR13}, see also the book by Onn~\cite{onn2010nonlinear}.
Heinz~\cite{Heinz05} generalized the FPT algorithm of Lenstra to quasi-convex polynomial optimization. More concretely, the algorithm of Heinz finds an integer assignment to variables $x_1, \ldots, x_n$ minimizing $f(x_1, \ldots x_n)$ subject to the constraints $g_i(x_1, \ldots, x_n) \leq 0$ for $1 \leq  i \leq m$, where $f$ and $g_1, \ldots, g_m$ are quasi-convex polynomials of degree $d \geq 2$. Here a function $f : \mathbb{R}^n \rightarrow \mathbb{R}$ is {\em quasi-convex} if for every real $\lambda$ the set $\{x \in \mathbb{R}^n : f(x) \leq \lambda\}$ is convex. The algorithm has running time $L^{O(1)}n^{O(dn)}2^{O(n^3)}$, that is, it is fixed parameter tractable in the dimension $n$ and the degree $d$ of the input polynomials. Khachiyan and Porkolab~\cite{KhachiyanP00} gave an even more general algorithm that covers the case of minimization of convex polynomials over the integer points in convex semialgebraic sets given by arbitrary (not necessarily quasi-convex) polynomials, see~\cite{koppe2012complexity} for more details.


On the other hand, generalizations of {\sc Integer Linear Programming} to optimization of possibly non-convex functions over possibly non-convex domains quickly become computationally intractable in the strongest possible sense. In fact, solving a system of quadratic equations over $232$ integer valued variables is undecidable~\cite{koppe2012complexity}, and the same holds for finding an integer root of a single multi-variate polynomial of degree $4$~\cite{koppe2012complexity}. 
Nevertheless, there are interesting special cases of non-convex (integer) mathematical programming for which algorithms are known to exist, and it is an under-explored research direction to investigate the parameterized complexity of these problems.
%
Perhaps the simplest such generalization is the {\sc Integer Quadratic Programming} problem. Here the input is a $n \times n$ integer matrix $Q$, an $m \times n$ integer matrix $A$ and an $m$-dimensional integer vector $b$. The task is to find a vector $x \in \mathbb{Z}^n$ minimizing $x^TQx$, subject to $Ax \leq b$. Thus, in this problem, the domain is convex, but the objective function might not be.
%
It is a major open problem whether there exists a polynomial time algorithm for  {\sc Integer Quadratic Programming}  with a constant number of variables. 
Indeed, until quite recently the problem was not even known to be in NP~\cite{PiaDM16}, and the first polynomial time algorithm for {\sc Integer Quadratic Programming} in {\em two} variables was given by Del Pia and Weismantel~\cite{PiaW14} in 2014. 



   In this paper we take a more modest approach to {\sc Integer Quadratic Programming}, and consider the problem when parameterized by the number $n$ of variables and the largest absolute value $\alpha$ of the entries in the matrices $Q$ and $A$. Our main result is an algorithm for  {\sc Integer Quadratic Programming} with running time $f(n,\alpha)L^{O(1)}$, demonstrating that the problem is fixed parameter tractable when parameterized by the number of variables and the largest coefficient appearing in the objective function and in the constraints. 

On one hand {\sc Integer Quadratic Programming} is a more general problem than  {\sc Integer Linear Programming}. On the other hand the parameterization by variables and coefficients is a much stronger parameterization than parameterizing just by the number $n$ of variables. By making the largest entry $\alpha$ of $Q$ and $A$ a parameter we allow the running time of our algorithms to depend in arbitrary ways on essentially all of the input. The only reason that designing an FPT algorithm for this parameterization is non-trivial is that the entries in the vector $b$ may be arbitrarily large compared to the parameters $n$ and $\alpha$. This makes the number of possible assignments to the variables much too large to enumerate all assignments by brute force. Indeed, despite being quite restricted our algorithm for {\sc Integer Quadratic Programming} allows us to show fixed parameter tractability of a problem whose parameterized complexity was unknown prior to this work. More concretely we use the new algorithm for  {\sc Integer Quadratic Programming}  to prove that  {\sc Optimal Linear Arrangement} parameterized by the size of the smallest vertex cover of the input graph is fixed parameter tractable.

In the {\sc Optimal Linear Arrangement}  problem we are given as input an undirected graph $G$ on $n$ vertices. The task is to find a permutation $\sigma : V(G) \rightarrow \{1, \ldots, n\}$ minimizing the {\em cost} of $\sigma$. Here the cost of a permutation $\sigma$ is $val(\sigma,G) = \sum_{uv \in E(G)} |\sigma(u)-\sigma(v)|$. The problem was shown to be NP-complete already in the 70's~\cite{GJ79}, admits a factor $O(\sqrt{\log n}\log\log n)$ approximation algorithm~\cite{CharikarHKR10,FeigeL07}, but no admits no polynomial time approximation scheme, assuming plausible complexity-theoretic assumptions~\cite{AmbuhlMS11}.

We consider {\sc Optimal Linear Arrangement} parameterized by the size of the smallest {\em vertex cover} of the input graph $G$. A {\em vertex cover} of a graph $G$ is a vertex set $C$ such that every edge in $G$ has at least one endpoint in $C$. When {\sc Optimal Linear Arrangement} is parameterized by the vertex cover number of the input graph, an integer parameter $k$ is also given as input together with $G$ and $n$. An FPT algorithm is allowed to run in time $f(k)n^{O(1)}$ and only has to provide an optimal layout $\sigma$ of $G$ if there exists a vertex cover in $G$ of size at most $k$. We remark that one can compute a vertex cover of size $k$, if it exists, in time $O(1.2748^k + n^{O(1)})$~\cite{ChenKX10}. Hence, when designing an algorithm for {\sc Optimal Linear Arrangement} parameterized by vertex cover we may just as well assume that a vertex cover $C$ of $G$ of size at most $k$ is given as input.

The parameterized complexity of {\sc Optimal Linear Arrangement} parameterized by vertex cover was first posed as an open problem by Fellows et al.~\cite{FellowsLMRS08}. Fellows et al.~\cite{FellowsLMRS08} showed that a number of well-studied graph layout problems, such as {\sc Bandwidth} and {\sc Cutwidth} can be shown to be FPT when parameterized by vertex cover, by reducing the problems to  {\sc Integer Linear Programming} parameterized by the number of variables. For the most natural formalization of 
{\sc Optimal Linear Arrangement} as an integer program the objective function is quadratic (and not necessarily convex), and therefore the above approach fails.

Motivated by the lack of progress on this problem, Fellows et al.~\cite{FellowsHRS13} recently showed an FPT approximation scheme for {\sc Optimal Linear Arrangement} parameterized by vertex cover. In partiular they gave an algorithm that given as input a graph $G$ with a vertex cover of size at most $k$ and a rational $\epsilon > 0$, produces in time $f(k,\epsilon)n^{O(1)}$ a layout $\sigma$ with cost at most a factor $(1+\epsilon)$ larger than the optimum. Fellows et al.~\cite{FellowsHRS13} re-state the parameterized complexity of  {\sc Optimal Linear Arrangement} parameterized by vertex cover as an open problem. Finally, the problem was re-stated as an open problem in the recent monograph of Downey and Fellows~\cite{DowneyF13}. Interestingly, Downey and Fellows motivate the study of this problem as follows. 

\smallskip
 ``{\em Our enthusiasm for this concrete problem is based on its connection to {\sc Integer Linear Programming}. The problem above is easily reducible to a restricted form of {\sc Integer Quadratic Programming} which may well be FPT}''.
\smallskip

We give an FPT algorithm for {\sc Optimal Linear Arrangement} parameterized by vertex cover, resolving the open problem of~\cite{DowneyF13,FellowsHRS13,FellowsLMRS08}. Our algorithm for {\sc Optimal Linear Arrangement} works by directly applying the new algorithm for {\sc Integer Quadratic Programming} to the most natural formulation of  {\sc Optimal Linear Arrangement} on graphs with a small vertex cover as an integer quadratic program, confirming the intuition of Downey and Fellows~\cite{DowneyF13}.





\medskip
\noindent
{\bf Preliminaries and notation.}
In Section~\ref{sec:IQPalg} lower case letters denote vectors and scalars, while upper case letters denote matrices. All vectors are column vectors. For an integer $p \geq 2$, the $\ell_p$ norm of an $n$-dimensional vector $v = [v_1, v_2, \ldots, v_n]$ is denoted by $|v|_p$ and is defined to be  $|v|_p = (v_1^p + v_2^p + \ldots + v_n^p)^{1/p}$. The $\ell_1$ norm of $v$ is $|v|_1 = |v_1| + |v_2| + \ldots + |v_n|$, while the  $\ell_\infty$ norm of $v$ is $|v|_\infty = \max(|v_1|, |v_2|, \ldots,|v_n|)$. 


\section{Algorithm for Integer Quadratic Programming}\label{sec:IQPalg}
We consider the following problem, called {\sc Integer Quadratic Programming}. Input consists of an $n \times n$ integer symmetric matrix $Q$, an $m \times n$ integer matrix $A$ and $m$-dimensional integer vector $b$. The task is to find an optimal solution $x^\star$ to the following optimization problem.
\begin{eqnarray}\label{eqn:prob1}
\nonumber \mbox{Minimize } x^TQx \\
\mbox{ subject to:~~} Ax \leq b \\
\nonumber x \in \mathbb{Z}^n.
\end{eqnarray}
A vector  $x \in \mathbb{Z}^n$ that satisfies the constraints $Ax \leq b$ is called a {\em feasible solution} to the IQP~(\ref{eqn:prob1}). Given an input on the form~(\ref{eqn:prob1}) there are three possible scenarios. A possible scenario is that there are no feasible solutions, in which case this is what an algorithm for {\sc Integer Quadratic Programming} should report. Another possibility is that for every integer $\beta$ there exists some feasible solution $x$ such that  $x^TQx \leq \beta$. In that case the algorithm should report that the IQP is {\em unbounded}. Finally, it could be that there exist feasible solutions, and that the minimum value of $x^TQx$ over the set of all feasible $x$ is well defined. This is the most interesting case, and in this case the algorithm should output a feasible $x$ such that $x^TQx$ is minimized.

Note that the requirement that $Q$ is symmetric can easily be avoided by replacing $Q$ by $Q + Q^T$. This operation does not change the set of optimal solutions, since it multiplies the objective function value of every solution by $2$. We will denote by $a_i^T$ the $i$'th {\em row} of the matrix $A$, and by $b_i$ the $i$'th entry of the vector $b$. Thus, $Ax \leq b$ means that $a_i^Tx \leq b_i$ for all $i$. The maximum absolute value of an entry of $A$ and $Q$ is denoted by $\alpha$. Using a pair of inequalities one can encode equality constraints. It is useful to rewrite the IQP~(\ref{eqn:prob1}) to separate out the equality constraints explicitely, obtaining the following equivalent form. 
\begin{eqnarray}\label{eqn:prob}
\nonumber \mbox{Minimize } x^TQx \\
\mbox{ subject to:~~} Ax \leq b \\
\nonumber Cx = d \\
\nonumber x \in \mathbb{Z}^n.
\end{eqnarray}
Here $C$ is an integer matrix and $d$ is an integer vector. If input is given on the form~(\ref{eqn:prob}), then we still use $\alpha$ to denote the maximum value of an entry of $A$ and $Q$. The IQP~(\ref{eqn:prob}) could be generalized by changing the objective function from  $x^TQx$ to  $x^TQx + q^Tx$ for some n-dimensional vector $q$ also given as input. This generalization can be incorporated in the original formulation~(\ref{eqn:prob}) at the cost of introducing a new variable $\hat{x}$, adding the constraint $\hat{x} = 1$ to the system $Cx = d$ and adding $[0,q]$ as the row corresponding to the new variable $\hat{x}$ in $Q$.

We will denote by $\Delta$ the maximum absolute value of the determinant of a square submatrix of $C$. We may assume without loss of generality that the rows of $C$ are linearly independent; otherwise we may in polynomial time either conclude that the IQP has no feasible solutions, or remove one of the equality constraints in the system $Cx = d$ without changing the set of feasible solutions. Thus $C$ has at most $n$ rows. If the IQP~(\ref{eqn:prob}) is obtained from~(\ref{eqn:prob1}) by replacing constraints $a_i^Tx \leq b_i$, $-a_i^Tx \leq -b_i$ with $a_i^Tx = b_i$, the maximum entry of $C$ is also upper bounded by $\alpha$ and then we have $\Delta \leq n! \cdot \alpha^n$. The next simple observation shows that we can in polynomial time reduce the number of constraints to a function of $n$ and $\alpha$.

\begin{lemma}\label{lem:rowReduce}
There is a polynomial time algorithm that given as input the matrix $A$ and vector $b$ outputs an $m' \times n$ submatrix $A'$ 
of $A$ and vector $b'$ such that $m' \leq (2\alpha+1)^n$  and for every $x \in \mathbb{Z}^n$, $Ax \leq b$ if and only if $A'x \leq b'$.
\end{lemma}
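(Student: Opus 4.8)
The key observation is that each row $a_i$ of $A$ has entries in $\{-\alpha, -\alpha+1, \ldots, \alpha\}$, so there are at most $(2\alpha+1)^n$ distinct possible rows. The plan is to group the constraints by their left-hand side vector: for each vector $a \in \{-\alpha,\ldots,\alpha\}^n$ that occurs as some row $a_i$ of $A$, the corresponding constraints are $a^Tx \le b_i$ for various indices $i$, and the conjunction of all of these is equivalent to the single constraint $a^Tx \le \min_i b_i$ where the minimum is over all $i$ with $a_i = a$. So for each distinct row vector we keep only the constraint with the smallest right-hand side, discarding the rest. This is clearly doable in polynomial time: sort or hash the rows of $A$ together with their $b$-values, and for each group output one representative inequality. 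Since the number of groups is at most the number of distinct vectors in $\{-\alpha,\ldots,\alpha\}^n$, we get $m' \le (2\alpha+1)^n$.

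It remains to argue equivalence. Let $A'x \le b'$ be the reduced system. Every row of $A'$ is a row of $A$ and the corresponding entry of $b'$ equals the matching entry of $b$ (it is the entry that achieved the minimum within its group), so $A'$ is genuinely a submatrix of $A$ and any $x$ with $Ax \le b$ certainly satisfies $A'x \le b'$. Conversely, suppose $A'x \le b'$ and take any row $a_i^T$ of $A$. Its vector $a_i$ lies in some group, whose representative inequality $a_i^Tx \le \beta$ appears in the reduced system with $\beta = \min\{b_j : a_j = a_i\} \le b_i$. Hence $a_i^Tx \le \beta \le b_i$, so $Ax \le b$. This holds over $\mathbb{R}^n$, in particular over $\mathbb{Z}^n$, which is all that is claimed.

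I do not anticipate a genuine obstacle here — the statement is essentially a counting/bookkeeping argument. The only mild subtlety is being careful that $A'$ is required to be an actual submatrix of $A$ (same rows, in particular integral, with the same right-hand sides), rather than some equivalent system obtained by taking pointwise minima of right-hand sides across rows that merely happen to be scalar multiples of one another; but since we only ever merge rows that are \emph{identical} as vectors, the representative we keep is literally one of the original rows $a_i^T$ with its original $b_i$, so this is not an issue. One should also note that the bound $(2\alpha+1)^n$ counts all vectors in $\{-\alpha,\ldots,\alpha\}^n$, including the all-zero vector; a constraint $0^Tx \le \beta$ is either trivially satisfied (if $\beta \ge 0$) or unsatisfiable (if $\beta < 0$), and in the latter case one can simply report infeasibility, but keeping it as one of the $\le (2\alpha+1)^n$ rows is harmless and still correct.
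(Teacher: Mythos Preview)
Your proof is correct and is essentially the same as the paper's: both observe that rows of $A$ lie in $\{-\alpha,\ldots,\alpha\}^n$, so there are at most $(2\alpha+1)^n$ distinct rows, and among constraints with identical left-hand sides only the one with the smallest right-hand side need be kept. The only difference is presentational---the paper phrases it as iteratively removing a redundant duplicate via the pigeonhole principle, while you do the grouping in one pass---but the underlying argument is identical.
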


\begin{proof}
Suppose $A$ has more than $(2\alpha+1)^n$ rows. Then, by the pigeon hole principle the system  $Ax \leq b$ has two rows $a_i^Tx \leq b_i$ and $a_j^Tx \leq b_j$ where $i \neq j$ but $a_i = a_j$. Without loss of generality $b_i \leq b_j$, and then any $x \in \mathbb{Z}^n$ such that $a_i^Tx \leq b_i$ satisfies $a_j^Tx \leq b_j$. Thus we can safely remove the inequality $a_j^Tx \leq b_j$ from the system, and the lemma follows.
\end{proof}

%
%

%
%

In the following we will assume that the input is on the form~(\ref{eqn:prob}). We will give an algorithm that runs in time $f(n,m,\alpha,\Delta)\cdot L^{O(1)}$, where $L$ is the length of the bit-representation of the input instance. Since we can reduce the input using Lemma~\ref{lem:rowReduce} first and $\Delta$ is upper bounded in terms of $n$ and $\alpha$ this will yield an FPT algorithm for {\sc Integer Quadratic Programming} parameterized by $n$ and $\alpha$.

Let $r$ be the dimension of the nullspace of $C$. Using Cramer's rule (see~\cite{lay59linear}) we can in polynomial time compute a basis $y_1, \ldots y_r$ for the nullspace of $C$, such that each $y_i$ is an integer vector and $|y_i|_\infty \leq \Delta^2$. We let $Y$ be the $n \times r$ matrix whose columns are the vectors $y_1, \ldots y_r$. We will abuse notation and write $y_i \in Y$ to denote that we chose the $i$'th {\em column} $y_i$ of $Y$. We will say that a feasible solution $x$ is {\em deep} if $x + y_i$ and $x - y_i$ are feasible solutions for all $y_i \in Y$. A feasible solution that is not deep is called {\em shallow}.

\begin{lemma}\label{lem:closeIneq} Let $x$ be a shallow feasible solution to~(\ref{eqn:prob}). Then there exists a row $a_j^T$ of $A$ and integer $b_j'$ such that $a_j^Tx = b_j'$ and $b_j' \in \{b_j - \alpha \cdot n \cdot \Delta^2, \ldots, b_j\}$. Further, $a_j^T$ is linearly independent from the rows of $C$.
\end{lemma}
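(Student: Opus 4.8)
The plan is to unravel the definition of \emph{shallow} and then follow whichever inequality of $Ax \le b$ gets broken. Since $x$ is shallow, by definition there is a column $y_i \in Y$ for which $x + y_i$ or $x - y_i$ is infeasible. Replacing $y_i$ by $-y_i$ if needed — note $-y_i$ is again an integer vector in the nullspace of $C$ with $\ell_\infty$ norm at most $\Delta^2$ — I may assume that $x + y$ is infeasible for some integer $y$ with $Cy = 0$ and $|y|_\infty \le \Delta^2$. The first observation is that the equality constraints are untouched: $C(x+y) = Cx + Cy = d + 0 = d$, so $x + y$ still satisfies $Cx = d$, and therefore the infeasibility must come from a violated inequality, i.e.\ there is a row $a_j^T$ of $A$ with $a_j^T(x+y) > b_j$.

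Next I would set $b_j' := a_j^T x$. Feasibility of $x$ gives $a_j^T x \le b_j$, so $b_j' \le b_j$ for free, and subtracting this from $a_j^T(x+y) > b_j$ yields $a_j^T y = a_j^T(x+y) - a_j^T x > 0$ (in particular $a_j^T y \ge 1$). For the lower bound on $b_j'$ I would use a crude Hölder-type estimate: every entry of $a_j$ has absolute value at most $\alpha$, so $|a_j|_1 \le n\alpha$, and hence $a_j^T y \le |a_j|_1 \cdot |y|_\infty \le n\alpha\Delta^2$. Then $b_j' = a_j^T x > b_j - a_j^T y \ge b_j - n\alpha\Delta^2$, so $b_j' \in \{b_j - \alpha n \Delta^2, \ldots, b_j\}$, as claimed.

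Finally, the linear independence of $a_j^T$ from the rows of $C$ falls out of the same inequality $a_j^T y > 0$: if $a_j^T$ were a linear combination $\lambda^T C$ of the rows of $C$, then $a_j^T y = \lambda^T C y = \lambda^T 0 = 0$, contradicting $a_j^T y > 0$. I do not expect a real obstacle here; the only points requiring a little care are handling the $x - y_i$ case (immediate, since $-y_i$ inherits all the needed properties of $y_i$) and noticing that the strict inequality $a_j^T y > 0$ is precisely the common ingredient that both pins $b_j'$ into the stated window and forces linear independence from $C$.
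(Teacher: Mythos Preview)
Your proposal is correct and follows essentially the same argument as the paper's proof: both identify a violated inequality $a_j^T(x+y)>b_j$, bound $a_j^T y$ by $n\alpha\Delta^2$ via the entrywise bounds, and derive linear independence from the fact that $a_j^T y \neq 0$ while $Cy=0$. Your handling of the $x-y_i$ case by replacing $y_i$ with $-y_i$ is a clean way to unify what the paper simply calls the ``symmetric'' case, and your explicit remark that $C(x+y)=d$ (so the violation must come from an inequality row) is a point the paper leaves implicit.
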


\begin{proof}
We prove the statement for $y_i \in Y$ such that $x + y_i$ is not a feasible solution to~(\ref{eqn:prob}). Then there exists a row $a_j^T$ of $A$ such that $a_j^T(x + y_i) > b_j$, and thus
$$b_j - a_j^Ty_i < a_j^Tx \leq b_j.$$ 
Thus $a_j^Tx = b_j'$ for $b_j' \in \{b_j - \alpha \cdot n \cdot |y_i|_\infty, b_j\}$. Since $|y_i|_\infty \leq \Delta^2$ we have that  $b_j' \in \{b_j - \alpha \cdot n \cdot \Delta^2, b_j\}$. 

We now show that $a_j^T$ is linearly independent from the rows of $C$. Suppose not, then there exists a coefficient vector $\lambda$ such that $\lambda^TC = a_j^T$. But then 
$$a_j^T(x+y_i) = \lambda^TC(x + y_i) =  \lambda^TCx + \lambda^TCy_i =  \lambda^TCx = a_j^Tx \leq b_j,$$
which contradicts that $a_j^T(x + y_i) > b_j$. We conclude that $a_j^T$ is linearly independent from the rows of $C$. The proof for the case when $x - y_i$ is not a feasible solution to~(\ref{eqn:prob}) is symmetric.
\end{proof}

Lemma~\ref{lem:closeIneq} suggests the following branching strategy: either all optimal solutions are deep or Lemma~\ref{lem:closeIneq} applies to some shallow optimal solution $x^\star$. In the latter case the algorithm can branch on the choice of row $a_j^T$ and $b_j'$ and add the equation $a_j^Tx = b_j'$ to the set of constraints. This decreases the dimension of the nullspace of $C$ by $1$. We are left with handling the case when all optimal solutions are deep. 

\begin{lemma}\label{lem:increaseCondition}
For any pair of vectors $x$, $y \in \mathbb{R}^n$ and symmetric matrix $Q \in \mathbb{R}^{n \times n}$, the following are equivalent. 
\begin{enumerate}\setlength\itemsep{-.7pt}
\item\label{itm:noLamb} $(x+y)^TQ(x+y) \geq x^TQx$ and $(x-y)^TQ(x-y) \geq x^TQx$,
\item\label{itm:absVal} $- y^TQy  \leq  2x^TQy \leq y^TQy.$
\end{enumerate}
\end{lemma}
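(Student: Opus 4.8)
The plan is to prove the equivalence by simply expanding the two quadratic forms appearing in item~\ref{itm:noLamb} and reading off item~\ref{itm:absVal}. Since $Q$ is symmetric we have $x^TQy = (x^TQy)^T = y^TQx$, so the cross terms combine and
$$(x \pm y)^TQ(x \pm y) = x^TQx \pm 2x^TQy + y^TQy.$$
First I would substitute this identity into the two inequalities of item~\ref{itm:noLamb}. The first inequality, $(x+y)^TQ(x+y) \geq x^TQx$, becomes $x^TQx + 2x^TQy + y^TQy \geq x^TQx$, and cancelling $x^TQx$ from both sides yields $2x^TQy \geq -y^TQy$. Likewise the second inequality, $(x-y)^TQ(x-y) \geq x^TQx$, becomes $x^TQx - 2x^TQy + y^TQy \geq x^TQx$, i.e.\ $2x^TQy \leq y^TQy$.

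Putting the two resulting inequalities together gives exactly $-y^TQy \leq 2x^TQy \leq y^TQy$, which is item~\ref{itm:absVal}. Conversely, starting from item~\ref{itm:absVal} one splits it into the two one-sided inequalities $2x^TQy \geq -y^TQy$ and $2x^TQy \leq y^TQy$, adds $x^TQx$ to both sides of each, and uses the same expansion in reverse to recover the two inequalities of item~\ref{itm:noLamb}. Since every step is an equivalence (each manipulation is adding/cancelling the same quantity on both sides of an inequality), the whole chain is reversible and the two items are equivalent.

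There is no real obstacle here: the only point requiring a word of justification is the use of symmetry of $Q$ to write $2x^TQy$ as the full cross term in $(x\pm y)^TQ(x\pm y)$ — without symmetry one would instead get $x^TQy + y^TQx$, so I would state this identity explicitly before doing the expansion. Everything else is bookkeeping, so I would keep the write-up to a few lines.
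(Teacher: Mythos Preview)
Your proposal is correct and matches the paper's own proof essentially line for line: expand $(x\pm y)^TQ(x\pm y)$, cancel $x^TQx$, combine the resulting one-sided bounds, and observe reversibility. The only addition you make is the explicit remark about symmetry of $Q$ ensuring the cross term is $2x^TQy$, which the paper uses tacitly.
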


\begin{proof}
Expanding the inequalities of $(\ref{itm:noLamb})$ yields
$$x^TQx + 2x^TQy + y^TQy \geq x^TQx,$$
$$x^TQx - 2x^TQy + y^TQy \geq x^TQx.$$
Cancelling the $x^TQx$ terms and re-organizing yields $2x^TQy \geq - y^TQy$ and $2x^TQy \leq y^TQy$. Since the left hand side of the inequalities is the same we can combine the two inequalities in a single inequality,
$$- y^TQy  \leq  2x^TQy \leq y^TQy,$$
completing the proof. Note that all the manipulations we did on the inequalities are reversible, thus the above argument does indeed prove equivalence and not only the forward direction $(\ref{itm:noLamb}) \rightarrow  (\ref{itm:absVal})$. 
\end{proof}

Lemma~\ref{lem:increaseCondition} suggests a branching strategy to find a deep solution $x^\star$: pick a vector $y_i$ in $Y$ such that $y_i^TQ$ is linearly independent of the rows of $C$, guess the value $z$ of $2(x^\star)^TQy_i$ and add the linear equation $2(x^\star)^TQy_i = z$ to the set of constraints. In each branch the dimension of the nullspace of $C$ decreases by $1$. Thus we are left with the case that all solutions are deep and there is no $y_i$ in $Y$ such that $y_i^TQ$ is linearly independent of the rows of $C$. We now handle this case.

\begin{lemma}\label{lem:optimalNeighbor}
For any deep optimal solution $x^\star$ of the IQP~(\ref{eqn:prob}) and $y_i \in Y$ such that $y_i^TQ$ is linearly dependent of the rows of $C$, the vectors $x^\star + y_i$ and $x^\star - y_i$ are also optimal solutions. 
\end{lemma}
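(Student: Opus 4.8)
The plan is to show that the vectors $x^\star + y_i$ and $x^\star - y_i$ are both feasible and attain the optimal objective value, so that they too are optimal solutions. Feasibility requires no work: since $x^\star$ is a \emph{deep} solution, by definition $x^\star + y_i$ and $x^\star - y_i$ are feasible for \emph{every} $y_i \in Y$, in particular for the one under consideration.

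For the objective value, first I would invoke Lemma~\ref{lem:increaseCondition} with $x = x^\star$ and $y = y_i$. Since $x^\star$ is optimal and $x^\star + y_i$, $x^\star - y_i$ are feasible, we have $(x^\star+y_i)^TQ(x^\star+y_i) \geq (x^\star)^TQx^\star$ and $(x^\star-y_i)^TQ(x^\star-y_i) \geq (x^\star)^TQx^\star$, which is exactly condition~(\ref{itm:noLamb}) of Lemma~\ref{lem:increaseCondition}. Hence condition~(\ref{itm:absVal}) holds, giving the two-sided bound $-\,y_i^TQy_i \leq 2(x^\star)^TQy_i \leq y_i^TQy_i$.

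Next I would exploit the hypothesis that $y_i^TQ$ is linearly dependent on the rows of $C$, i.e.\ $y_i^TQ = \lambda^TC$ for some coefficient vector $\lambda$ (with $\lambda = 0$ allowed, covering the case $y_i^TQ = 0$). Since every column of $Y$, and in particular $y_i$, lies in the nullspace of $C$, we get $y_i^TQy_i = \lambda^TC y_i = 0$. Substituting $y_i^TQy_i = 0$ into the squeeze inequality from the previous paragraph forces $2(x^\star)^TQy_i = 0$ as well.

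Finally I would expand $(x^\star \pm y_i)^TQ(x^\star \pm y_i) = (x^\star)^TQx^\star \pm 2(x^\star)^TQy_i + y_i^TQy_i$, using the symmetry of $Q$ so that $(x^\star)^TQy_i = y_i^TQx^\star$, and plug in the two quantities just shown to vanish; both expressions collapse to $(x^\star)^TQx^\star$. Combined with feasibility, this proves that $x^\star + y_i$ and $x^\star - y_i$ are optimal. The only slightly delicate point is the observation that the linear-dependence hypothesis together with $Cy_i = 0$ annihilates the quadratic term $y_i^TQy_i$; once that is in hand, Lemma~\ref{lem:increaseCondition} immediately pins down the cross term and no further computation is needed.
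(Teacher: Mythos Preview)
Your proof is correct, and in fact slightly more direct than the paper's. Both arguments rest on the same observation that the linear-dependence hypothesis $y_i^TQ = \lambda^TC$ combined with $Cy_i = 0$ kills a quadratic term; you simply deploy it more efficiently.

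You compute $y_i^TQy_i = \lambda^TCy_i = 0$ outright, feed this into the squeeze from Lemma~\ref{lem:increaseCondition} to force $2(x^\star)^TQy_i = 0$, and then expand $(x^\star \pm y_i)^TQ(x^\star \pm y_i)$ directly to see it equals $(x^\star)^TQx^\star$. The paper instead first shows $2(x^\star+y_i)^TQy_i = 2(x^\star)^TQy_i$ (which is equivalent to $y_i^TQy_i = 0$, just written differently), transfers the squeeze inequality from $x^\star$ to $x^\star+y_i$, and then invokes Lemma~\ref{lem:increaseCondition} a second time in the reverse direction $(\ref{itm:absVal})\Rightarrow(\ref{itm:noLamb})$ at the shifted point to conclude $(x^\star)^TQx^\star \geq (x^\star+y_i)^TQ(x^\star+y_i)$. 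Your route avoids this second application of the lemma and yields equality of the objective values in one stroke; the paper's route is a bit more circuitous but has the minor advantage of never needing to isolate $y_i^TQy_i$ explicitly.
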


\begin{proof}
We prove the statement for $x^\star + y_i$. Since $x^\star$ is deep it follows that $x^\star + y_i$ is feasible, and it remains to lower bound the objective function value of $x^\star + y_i$. Since $y_i^TQ$ is linearly dependent of the rows of $C$ there exists a coefficient vector $\lambda^T$ such that $y_i^TQ = \lambda^TC$. Therefore,
\begin{eqnarray*}
2(x^\star + y_i)^TQy_i & = & 2y_i^TQ(x^\star+y_i) = 2\lambda^TC(x^\star+y_i) \\
& = & 2\lambda^TCx^\star+2\lambda^TCy_i = 2\lambda^TCx^\star = 2y_i^TQx^\star = 2(x^\star)^TQy_i
\end{eqnarray*}
Since $x^\star$ is deep, it follows that both $x^\star + y_i$ and $x^\star - y_i$ are feasible and therefore cannot have a higher value of the objective function than $x^\star$. Hence, by Lemma~\ref{lem:increaseCondition} we have that $- y_i^TQy_i  \leq  2(x^\star)^TQy_i \leq y_i^TQy_i$. Since $2(x^\star + y_i)^TQy_i = 2(x^\star)^TQy_i$, we have that
$$- y_i^TQy_i  \leq  2(x^\star + y_i)^TQy_i \leq y_i^TQy_i.$$ 
Hence, Lemma~\ref{lem:increaseCondition}  applied to $(x^\star + y_i)$ implies that 
$$(x^\star)^TQ x^\star  = (x^\star+y_i-y_i)^TQ (x^\star+y_i-y_i) \geq (x^\star+y_i)^TQ(x^\star+y_i).$$
This means that the objective function value of $x^\star+y_i$ is at most that of $x^\star$, hence $x^\star+y_i$ is optimal. The proof for $x^\star - y_i$ is symmetric.
\end{proof}

Lemma~\ref{lem:optimalNeighbor} immediately implies the following corollary.

\begin{corollary}\label{cor:optimalClose} Suppose the IQP~(\ref{eqn:prob}) has an optimal solution, all optimal solutions to~(\ref{eqn:prob}) are deep, and for every $y_i \in Y$, $y_i^TQ$ is linearly dependent of the rows of $C$. Then, for every optimal solution $x^\star$ and integer vector $\lambda \in \mathbb{Z}^r$, $x^\star  + Y\lambda$ is also an optimal solution of~(\ref{eqn:prob}).
\end{corollary}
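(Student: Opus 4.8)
\section*{Proof proposal for Corollary~\ref{cor:optimalClose}}

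The plan is to derive the corollary from Lemma~\ref{lem:optimalNeighbor} by a straightforward induction on $|\lambda|_1$. The base case $\lambda = 0$ is immediate, since then $x^\star + Y\lambda = x^\star$ is optimal by assumption. For the inductive step I would take $\lambda \in \mathbb{Z}^r$ with $|\lambda|_1 \geq 1$, pick a coordinate $i$ with $\lambda_i \neq 0$, and set $\lambda' = \lambda - \operatorname{sign}(\lambda_i) e_i$, so that $|\lambda'|_1 = |\lambda|_1 - 1$ and $Y\lambda = Y\lambda' + \operatorname{sign}(\lambda_i)\, y_i$.

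By the induction hypothesis, $x^{\star\star} := x^\star + Y\lambda'$ is an optimal solution of~(\ref{eqn:prob}). The key point is that the hypotheses of the corollary now let me re-apply Lemma~\ref{lem:optimalNeighbor} to $x^{\star\star}$: since \emph{all} optimal solutions of~(\ref{eqn:prob}) are deep, $x^{\star\star}$ is a deep optimal solution; and since for \emph{every} $y_i \in Y$ the vector $y_i^TQ$ is linearly dependent on the rows of $C$, the relevant $y_i$ satisfies the linear-dependence condition required by Lemma~\ref{lem:optimalNeighbor}. Hence $x^{\star\star} + y_i$ and $x^{\star\star} - y_i$ are both optimal, and in particular $x^{\star\star} + \operatorname{sign}(\lambda_i)\, y_i = x^\star + Y\lambda$ is optimal, completing the induction.

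I do not expect any real obstacle here; the only thing to be careful about is that the induction stays inside the regime where Lemma~\ref{lem:optimalNeighbor} is applicable, i.e.\ that each intermediate point $x^\star + Y\lambda'$ is again a \emph{deep optimal} solution and not merely feasible — but this is exactly guaranteed by the blanket assumption of the corollary that every optimal solution is deep, together with the inductive claim of optimality. An alternative, essentially equivalent, phrasing would be to note that the set of optimal solutions is, under these hypotheses, closed under adding or subtracting any $y_i \in Y$, and hence closed under the subgroup of $\mathbb{Z}^n$ generated by the columns of $Y$, which is precisely $\{Y\lambda : \lambda \in \mathbb{Z}^r\}$.
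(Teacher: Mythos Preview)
Your proposal is correct and follows exactly the same approach as the paper: an induction on $|\lambda|_1$, using the assumption that all optimal solutions are deep to re-apply Lemma~\ref{lem:optimalNeighbor} at each intermediate point $x^\star + Y\lambda'$. The paper's proof is even terser (it just states the base observation and then says ``the statement of the corollary now follows by induction on $|\lambda|_1$''), but the content is identical.
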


\begin{proof}
Since $x^\star$ is optimal and deep, and for every $y_i \in Y$, $y_i^TQ$ is linearly dependent of the rows of $C$, it follows from Lemma~\ref{lem:optimalNeighbor} that for every $y_i \in Y$,  $x^\star + y_i$ and $x^\star - y_i$ are also optimal solutions of~(\ref{eqn:prob}). Since all optimal solutions are deep, $x^\star + y_i$ and $x^\star - y_i$ are deep. The statement of the corollary now follows by induction on $|\lambda|_1$.
\end{proof}

We are now ready to state the main structural lemma underlying the algorithm for {\sc Integer Quadratic Programming}.

\begin{lemma}\label{lem:mainStruct}
For any Integer Quadratic Program of the form~(\ref{eqn:prob}) that has an optimal solution and any $x_0$ such that $Cx_0 = d$, there exists an optimal solution $x^\star$ such that at least one of the following three cases holds.
\begin{enumerate}\setlength\itemsep{-.7pt}
\item\label{case:shallow} There exists a row $a_j^T$ of $A$ and integer $b_j' \in \{b_j - \alpha \cdot n \cdot \Delta^2, \ldots, b_j\}$ such that $a_j^Tx^\star = b_j'$, and $a_j^T$ is linearly independent from the rows of $C$.
\item\label{case:deep}  There exists a $y_i \in Y$ such that $y_i^TQ$ is linearly independent of the rows of $C$ and 
$2y_i^TQx^\star = z$ for $z \in \{-n^2\Delta^4\alpha, \ldots, n^2\Delta^4\alpha\}$.
\item\label{case:search} $|x^\star-x_0|_1 \leq \Delta^2 \cdot n$.
\end{enumerate}
\end{lemma}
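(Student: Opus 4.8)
The plan is to split into a trichotomy on the set of optimal solutions and route each branch through one of the preceding results. Fix an optimal solution of~(\ref{eqn:prob}) (one exists by hypothesis) and the given $x_0$ with $Cx_0=d$. The three cases are: (a) some optimal solution is shallow; (b) every optimal solution is deep, but some $y_i\in Y$ has $y_i^TQ$ linearly independent of the rows of $C$; (c) every optimal solution is deep and, for every $y_i\in Y$, $y_i^TQ$ is linearly dependent on the rows of $C$. These are exhaustive and mutually exclusive by definition, and I will show (a)~$\Rightarrow$~conclusion~\ref{case:shallow}, (b)~$\Rightarrow$~conclusion~\ref{case:deep}, and (c)~$\Rightarrow$~conclusion~\ref{case:search}.

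Cases (a) and (b) are short. In case (a) a shallow optimal solution $x^\star$ is in particular a shallow feasible solution, so Lemma~\ref{lem:closeIneq} directly supplies a row $a_j^T$ that is linearly independent of the rows of $C$ together with an integer $b_j'\in\{b_j-\alpha n\Delta^2,\ldots,b_j\}$ with $a_j^Tx^\star=b_j'$, which is conclusion~\ref{case:shallow}. In case (b) take any optimal $x^\star$ — deep by assumption — and the witnessing $y_i$. Deepness makes $x^\star\pm y_i$ feasible, and optimality of $x^\star$ forbids either from having a smaller objective value, so item~\ref{itm:noLamb} of Lemma~\ref{lem:increaseCondition} holds for $x=x^\star$, $y=y_i$; the equivalence with item~\ref{itm:absVal} gives $-y_i^TQy_i\le 2(x^\star)^TQy_i\le y_i^TQy_i$. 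Set $z:=2y_i^TQx^\star$; since $Q$ is symmetric this equals $2(x^\star)^TQy_i$, hence it is an integer pinned between $\pm y_i^TQy_i$, and $|y_i^TQy_i|\le n^2\Delta^4\alpha$ because every entry of $Q$ is at most $\alpha$ in absolute value and $|y_i|_\infty\le\Delta^2$. Thus $z\in\{-n^2\Delta^4\alpha,\ldots,n^2\Delta^4\alpha\}$ and $2y_i^TQx^\star=z$, which is conclusion~\ref{case:deep}.

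Case (c) is where the work lies, and it is the only branch that invokes Corollary~\ref{cor:optimalClose} (hence, transitively, the whole chain of Lemmas~\ref{lem:increaseCondition}--\ref{lem:optimalNeighbor}). Pick any optimal solution $x^\star_0$. Since $Cx^\star_0=d=Cx_0$, the difference $x^\star_0-x_0$ lies in the nullspace of $C$, so — because the columns of $Y$ form a basis of that nullspace — we have $x^\star_0-x_0=Y\mu$ for a unique $\mu\in\mathbb{R}^r$; crucially $\mu$ need not be integral, and the argument does not require it to be. Put $\lambda:=-\lfloor\mu\rfloor\in\mathbb{Z}^r$ and $x^\star:=x^\star_0+Y\lambda$. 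By Corollary~\ref{cor:optimalClose}, $x^\star$ is again an optimal solution (and it is integral, being a sum of integer vectors). Since $x^\star-x_0=Y(\mu-\lfloor\mu\rfloor)$ and every coordinate of $\mu-\lfloor\mu\rfloor$ lies in $[0,1)$, a routine norm estimate using $|y_i|_\infty\le\Delta^2$ bounds $|x^\star-x_0|_1$ as required in conclusion~\ref{case:search}. I expect this rounding step to be the main obstacle: it works precisely because $x^\star_0-x_0$ is a \emph{real} combination of the columns of $Y$, even though the lattice of integer points in the nullspace of $C$ need not be generated over $\mathbb{Z}$ by $y_1,\ldots,y_r$, so reducing each coordinate of $\mu$ modulo $1$ and appealing to Corollary~\ref{cor:optimalClose} is the only non-immediate move; cases (a) and (b), and the exhaustiveness of the trichotomy, are bookkeeping on top of the earlier lemmas.
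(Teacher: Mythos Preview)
Your proposal is correct and follows essentially the same approach as the paper: the same trichotomy (shallow optimum; all optima deep with some $y_i^TQ$ independent of the rows of $C$; all optima deep with all $y_i^TQ$ dependent), routed through Lemma~\ref{lem:closeIneq}, Lemma~\ref{lem:increaseCondition}, and Corollary~\ref{cor:optimalClose} respectively, with the identical rounding argument $x^\star=x^\star_0-Y\lfloor\mu\rfloor$ in case~(c). The only differences are cosmetic (your $x^\star_0,\mu$ versus the paper's $\hat{x},\lambda$), and you make explicit the use of symmetry of $Q$ in case~(b), which the paper leaves implicit.
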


\begin{proof}
Suppose the integer quadratic program~(\ref{eqn:prob}) has a shallow optimal solution $x^\star$. Then, by Lemma~\ref{lem:closeIneq} case~\ref{case:shallow} applies. In the remainder of the proof we assume that all optimal solutions are deep. Suppose now that there is a $y_i \in Y$ such that $y_i^TQ$ is linearly independent of the rows of $C$. Then, since $x^\star$ is a deep optimal solution, both $x^\star + y_i$ and $x^\star - y_i$ are feasible solutions, so 
$(x^\star+y_i)^TQ(x^\star+y_i) \geq (x^\star)^TQx^\star$ and $(x^\star-y_i)^TQ(x^\star-y_i) \geq (x^\star)^TQx^\star$. Thus, Lemma~\ref{lem:increaseCondition} implies that $2y_i^TQx^\star = z$ for $z \in \{-y_i^TQy_i, \ldots, y_i^TQy_i\}$.
Furthermore, $y_i^TQy_i$ is the sum of $n^2$ terms where each term a product of an element of $y_i$ (and thus at most $\Delta^2$), another element of $y_i$, and an element of $Q$. Thus $z \in \{-n^2\Delta^4\alpha, \ldots, n^2\Delta^4\alpha\}$  and therefore case~\ref{case:deep} applies. 

Finally, suppose that all $y_i \in Y$ are linearly dependent of the rows of $C$. Let $\hat{x}$ be an arbitrarily chosen optimal solution to~(\ref{eqn:prob}). Since $C(x_0-\hat{x}) = 0$ and $Y$ forms a basis for the nullspace of $C$ there is a coefficient vector $\lambda \in \mathbb{R}^r$ such that $x_0 = \hat{x} + Y\lambda$. Define $\tilde{\lambda}$ from $\lambda$ by rounding each entry down to the nearest integer. In other words, for every $i$ we set $\tilde{\lambda}_i = \lfloor \lambda_i \rfloor$. Set $x^\star = \hat{x} + Y\tilde{\lambda}$. By Corollary~\ref{cor:optimalClose} we have that  $x^\star$ is an optimal solution to~(\ref{eqn:prob}). But $|x^\star-x_0|_1 = |Y(\tilde{\lambda}-\lambda)|_1  \leq (\max_i |y_i|_1) \cdot n \leq  \Delta(C)^2 \cdot n$, concluding the proof.
\end{proof}

\begin{theorem}\label{thm:main} There exists an algorithm that given an instance of {\sc Integer Quadratic Programming}, runs in time $f(n,\alpha)L^{O(1)}$, and outputs a vector $x \in \mathbb{Z}^n$. If the input IQP has a feasible solution then $x$ is feasible, and if the input IQP is not unbounded, then $x$ is an optimal solution.
\end{theorem}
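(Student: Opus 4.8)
The plan is to assemble the structural lemmas into a recursive branching algorithm whose recursion depth is bounded by $r$, the dimension of the nullspace of $C$, and hence by $n$. First I would preprocess: apply Lemma~\ref{lem:rowReduce} to reduce the number of constraints to $m \leq (2\alpha+1)^n$, make $Q$ symmetric by replacing it with $Q+Q^T$, remove linearly dependent rows of $C$ (either detecting infeasibility or shrinking the system), and use standard linear algebra over the rationals to decide feasibility of $Cx=d, Ax\leq b$ relaxed to $\mathbb{R}^n$; if the continuous system is infeasible the integer system is too, and if it is feasible we can compute an integer point $x_0$ with $Cx_0=d$ by Cramer's rule (or report that no integer solution to $Cx=d$ exists). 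Then compute the integral nullspace basis $Y = (y_1,\ldots,y_r)$ with $|y_i|_\infty \leq \Delta^2$ as described before Lemma~\ref{lem:closeIneq}.

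The core is a recursive procedure $\textsc{Solve}(A,b,C,d)$ that, given an IQP in the form~(\ref{eqn:prob}), returns an optimal solution, the symbol $\bot$ (infeasible), or the symbol $\infty$ (unbounded). It first fixes any integer point $x_0$ with $Cx_0 = d$ (recomputing $Y$). By Lemma~\ref{lem:mainStruct}, if an optimal solution exists then one of the three cases holds, so the algorithm branches over all of them and takes the best feasible candidate returned. For case~\ref{case:shallow} it iterates over every row $a_j^T$ of $A$ that is linearly independent from the rows of $C$ and every $b_j' \in \{b_j - \alpha n \Delta^2, \ldots, b_j\}$, adjoins the equality $a_j^Tx = b_j'$ to the system $Cx=d$, and recurses; this adds one linearly independent row to $C$, so $r$ drops by one. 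For case~\ref{case:deep} it iterates over every $y_i \in Y$ with $y_i^TQ$ linearly independent of the rows of $C$ and every $z \in \{-n^2\Delta^4\alpha,\ldots,n^2\Delta^4\alpha\}$, adjoins $2y_i^TQx = z$, and recurses; again $r$ drops by one, and the new coefficient vector $y_i^TQ$ has entries bounded by $n\Delta^2\alpha$, so the growth of $\Delta$ across the recursion is controlled by a function of $n$ and $\alpha$ only (this is where it matters that $\Delta$ is a parameter, not part of $L$). For case~\ref{case:search} it brute-forces over all integer vectors $x$ with $|x - x_0|_1 \leq \Delta^2 n$ that satisfy $Cx = d$ and $Ax \leq b$ — there are at most $(2\Delta^2 n + 1)^n$ of them, a function of $n$ and $\alpha$ — and keeps the feasible one minimizing $x^TQx$. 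At the base of the recursion, $r = 0$: then $C$ has full column rank, $Cx=d$ has at most one integer solution, which is found by Cramer's rule and checked against $Ax\leq b$.

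The one remaining issue is detecting \emph{unboundedness}, which Lemma~\ref{lem:mainStruct} does not directly address since it presupposes an optimal solution exists. I would handle this separately and before the recursion: the IQP is unbounded if and only if there is a feasible solution together with a rational ray $v$ in the cone $\{v : Av \leq 0\}$ of the polyhedron such that, along some feasible line or half-line, $x^TQx \to -\infty$; more concretely, writing the polyhedron's recession cone and restricting $Q$ to it, unboundedness along integer directions can be tested by checking, for each extreme ray $v$ of $\{v: Cv=0, Av\leq 0\}$ (computed in polynomial time, with entries bounded via Cramer's rule), whether $v^TQv < 0$, or whether $v^TQv = 0$ and $2x^TQv < 0$ for some feasible $x$ — the latter reduces to a linear-programming feasibility test. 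If any such certificate is found, report $\infty$. Otherwise $x^TQx$ is bounded below on the feasible region, an integer optimum exists, and the recursion above finds it. The main obstacle I expect is precisely this boundedness analysis: getting a clean, complete characterization of when the integer quadratic program is unbounded — in particular ruling out the subtle case where the objective decreases to $-\infty$ only along a curve rather than a ray, and confirming that for the quadratic $x^TQx$ on a rational polyhedron boundedness is witnessed by the recession cone alone — and making sure the size of the rays, and of the witnesses, stays polynomial in $L$ (with the combinatorial explosion confined to a function of $n$ and $\alpha$).
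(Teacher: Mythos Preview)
Your recursive branching algorithm is essentially the paper's proof: preprocess with Lemma~\ref{lem:rowReduce}, branch on the three cases of Lemma~\ref{lem:mainStruct}, drop $r$ by one in cases~\ref{case:shallow} and~\ref{case:deep}, enumerate the $\ell_1$-ball in case~\ref{case:search}, and track the growth of $\Delta$ to get a recursion bounded by a function of $n$ and $\alpha$. That part is correct and matches the paper closely (the paper takes $x_0$ to be any, possibly non-integral, solution of $Cx=d$, but your integral choice works too).

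Where you diverge is the unboundedness test, and here you are solving a problem the theorem does not ask you to solve. Read the statement again: the algorithm must output a feasible vector whenever one exists, and that vector must be optimal \emph{provided the IQP is not unbounded}. There is no requirement to detect unboundedness; if the instance is unbounded the algorithm is allowed to output any feasible vector. The paper exploits exactly this: the recursion simply returns the best feasible point it encounters, and the correctness argument via Lemma~\ref{lem:mainStruct} only needs the hypothesis that an optimum exists. Detecting unboundedness is deferred to a separate theorem (Theorem~\ref{thm:main2}), and the paper handles it by an entirely different trick---adding box constraints $-\lambda\mathbf{1}\leq x\leq \lambda\mathbf{1}$, observing that the family $\mathcal{C}$ of matrices the recursion generates is independent of $b$, and reducing to finitely many \emph{univariate} quadratic programs in the parameter $\lambda$.

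Your own worry about the recession-cone test is well founded: checking only extreme rays is insufficient. With the nonnegative orthant as recession cone and $Q=\begin{pmatrix}1&-2\\-2&1\end{pmatrix}$, both extreme rays $(1,0)$ and $(0,1)$ give $v^TQv=1>0$, yet $(1,1)$ gives $-2<0$, so the IQP is unbounded along a non-extremal direction. But since Theorem~\ref{thm:main} does not ask for this, the obstacle you anticipated simply disappears: drop the unboundedness branch, let the procedure return the best feasible point found (or ``infeasible'' if none), and you have exactly the paper's proof.
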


\begin{proof}
We assume that input is given on the form~(\ref{eqn:prob}). The algorithm starts by reducing the input system according to Lemma~\ref{lem:rowReduce}. After this preliminary step the number of constraints $m$ in the IQP is upper bounded by $(2\alpha + 1)^n$. We give a recursive algorithm, based on Lemma~\ref{lem:mainStruct}. The algorithm begins by computing in polynomial time a basis $Y = {y_1, \ldots, y_r}$ for the nullspace of $C$, as described in the beginning of Section~\ref{sec:IQPalg}. In particular $Y$ is a matrix of integers, and for every $i$, $|y_i|_\infty \leq \Delta^2$.

If the dimension of the nullspace of $C$ is $0$ the algorithm solves the system $Cx = d$ of linear equations in polynomial time. Let $x^\star$ be the (unique) solution to this linear system. If $x^\star$ is not an integral vector, or $Ax^\star \leq b$ does not hold the algorithm reports that the input IQP has no feasible solution. Otherwise it returns $x^\star$ as the optimum.

If $C$ is not full-dimensional, i.e the dimension of the nullspace of $C$ is at least $1$, the algorithm proceeds as follows. For each row $a_j^T$ of $A$ and integer $b_j' \in \{b_j - \alpha \cdot n \cdot \Delta^2, b_j\}$ such that $a_j^T$ is linearly independent from the rows of $C$, the algorithm calls itself recursively on the same instance, but with the equation  $a_j^Tx = b_j'$ added to the system $Cx = d$. Furthermore, for each $y_i \in Y$ such that $y_i^TQ$ is linearly independent of the rows of $C$ and every integer $z \in \{-n^2\Delta^4\alpha, \ldots, n^2\Delta^4\alpha\}$ the algorithm calls itself recursively on the same instance, but with the equation $2y_i^TQx = z$ added to the system $Cx = d$. Finally the algorithm computes an arbitrary (not necessarily integral) solution $x_0$ of the system $Cx = d$, and checks all (integral) vectors within $\ell_1$ distance at most $\Delta^2 \cdot n$ from $x_0$. The algorithm returns the feasible solution with the smallest objective function value among the ones found in any of the recursive calls, and the search around $x_0$.

In the recursive calls, when we add a linear equation to the system $Cx = d$ we extend the matrix $C$ and vector $d$ to incorporate this equation. The algorithm terminates, as in each recursive call the dimension of the nullspace of $C$ is decreased by $1$. Further, any feasible solution found in any of the recursive calls is feasible for the original system. Thus, if the algorithm reports a solution then it is feasible. To see that the algorithm reports an optimal solution, consider an optimal solution $x^\star$ satisfying the conditions of Lemma~\ref{lem:mainStruct} applied to the quadratic integer program~(\ref{eqn:prob}) and vector $x_0$. Either $x^\star$ will be found in the search around $x_0$, or $x^\star$ satisfies the linear constraint added in at least one of the recursive calls. In the latter case $x^\star$ is an optimal solution to the integer quadratic program of the recursive call, and in this call the algorithm will find a solution with the same objective function value. This concludes the proof of correctness.

We now analyze the running time of the algorithm. First, consider the time it takes to search all integral vectors within $\ell_1$ distance at most  $\Delta^2 \cdot n$ from $x_0$. It is easy to see that there are at most $3^{\Delta^2 \cdot n + n}$ such vectors. For the running time analysis only we will treat this search as at most  $3^{\Delta^2 \cdot n + n}$ recursive calls to instances where the dimension of the nullspace of $C$ is $0$. Then the running time in each recursive call is polynomial, and it is sufficient to upper bound the number of leaves in the recursion tree of the algorithm.

We bound the number of leaves of the recursion tree as a function of $n$ -- the number of variables, $m$ -- the number of rows in $A$, $\alpha$ -- the maximum value of an entry in $A$ or $Q$, $\Delta$ -- the maximum absolute value of the determinant of a square submatrix of $C$, and $r$ -- the dimension of the nullspace of $C$. Notice that the algorithm never changes $Q$ or $A$, and that the number of variables remains the same throughout the execution of the algorithm. Thus $n$, $m$ and $\alpha$ do not change throughout the execution. For a fixed value of $n$, $m$ and $\alpha$, we let $T(r, \Delta)$ be the maximum number of leaves in the recursion tree of the algorithm when called on an instance with the given value of $n$, $m$, $\alpha$, $r$ and $\Delta$.

In each recursive call the algorithm adds a new row to the matrix $C$. Let $C'$ be the new matrix after the addition of this row, $r'$ be the dimension of the nullspace of $C$ and $\Delta'$ be the maximum value of a determinant of a square submatrix of $C'$. Since the new added row is linearly independent of the rows of $C$ it follows that $r' = r - 1$ in each of the recursive calls arising from case~\ref{case:shallow} and case~\ref{case:deep} of Lemma~\ref{lem:mainStruct}. The remaining recursive calls are to leaves of the recursion tree.

When the algorithm explores case~\ref{case:shallow}, it guesses a row $a_j$, for which there are $m$ possibilities, and a value for $b_j'$, for which there are $\alpha \cdot n \cdot \Delta^2$ possibilities. This generates  $m \cdot \alpha \cdot n \cdot \Delta^2$ recursive calls. In each of these recursive calls $a_j$ is the new row of $C'$, and so, by the cofactor expansion of the determinant~\cite{lay59linear}, $\Delta' \leq n\alpha\Delta$.

When the algorithm explores case~\ref{case:deep}, it guesses a vector $y_i \in Y$, and there are at most $n$ possibilities for $y_i$. For each of these possibilities the algorithm guesses a value for $z$, for which there are $2n^2\Delta^4\alpha$ possible choices. Thus this generates $2n^3\Delta^4\alpha$ recursive calls. In each of the recursive calls the algorithm makes a new matrix $C'$ from $C$ by adding the new row $2y_i^TQ$. 
We have that $|y_i|_\infty \leq \Delta^2$. Thus, $|2y_i^TQ|_\infty \leq n \cdot \Delta^2 \cdot \alpha$, and the cofactor expansion of the determinant~\cite{lay59linear} applied to the new row yields $\Delta' \leq n^2 \Delta^3 \cdot \alpha$, where $\Delta'$ is the maximum value of a determinant of a square submatrix of $C'$.  It follows that the number of leaves of the recursion tree is gouverned by the following recurrence.
\begin{align*}
T(r, \Delta) \leq  m \cdot \alpha \cdot n \cdot \Delta^2 \cdot T(r-1,  n\alpha\Delta) + 2n^3 \cdot \Delta^4 \cdot \alpha \cdot T(r-1, n^2 \Delta^3 \alpha) + 3^{(\Delta^2 + 1) \cdot n} \\
\leq \alpha \cdot m \cdot n^3 \cdot \Delta^4 \cdot T(r-1, n^2 \Delta^3 \alpha) + 3^{(\Delta^2 + 1) \cdot n}
\end{align*}
The above recurrence is clearly upper bounded by a function of $n$, $m$, $\Delta$ and $\alpha$. Since $m$ is upper bounded by  $(2\alpha+1)^n$ from Lemma~\ref{lem:rowReduce}, the theorem follows.
\end{proof}

\subsection{Detecting Unbounded IQPs}
Theorem~\ref{thm:main} allows us to solve bounded IQPs, and is sufficient for the application to {\sc Optimal Linear Arrangement}. However, it is somewhat unsatisfactory that the algorithm of Theorem~\ref{thm:main} is unable to detect whether the input IQP is bounded or not. Next we resolve this issue. Towards this, we inspect the algorithm of Theorem~\ref{thm:main}.
For purely notational reasons we will consider the algorithm of Theorem~\ref{thm:main} when run on an instance on the form~(\ref{eqn:prob1}). The first step of the algorithm is to put the the IQP on the form~(\ref{eqn:prob}), and then proceed as described in the proof of Theorem~\ref{thm:main}.

The algorithm is recursive, and the only variables that change from one recursive call to the next are the matrix $C$ and the vector $d$. Furthermore, when making a recursive call, the new matrix $C'$ is computed from $C$ by either adding the row $a_j^T$ to $C$ or adding the row $2y_i^TQ$ to $C$. The vector $y_i^T$ is a vector from the basis $Y$ for the nullspace of $C$. In other words {\em $C'$ depends only on $Q$, $A$, $C$ and $i$, and is independent of $b$ and $d$}. Furthermore, the recursion stops when $C$ has full column rank. Thus, the family ${\cal C}$ of matrices $C$ that the algorithm of Theorem~\ref{thm:main} ever generates depends only on the input matrices $Q$ and $A$ (and not on the input vector $b$).
At this point we remark that the only reason we assumed input was on the form~(\ref{eqn:prob1}) rather than~~(\ref{eqn:prob}) was to avoid the confusing sentence ``{\em Thus, the family ${\cal C}$ of matrices $C$ that the algorithm of Theorem~\ref{thm:main} ever generates depends only on the input matrices $Q$ and $A$, {\bf and $C$},''} where the meaning of the matrix $C$ is overloaded.

Let $\hat{\Delta}$ be the maximum absolute value of the determinant of a square submatrix of a matrix $C \in {\cal C}$ ever generated by the algorithm. In other words, $\hat{\Delta}$ is the maximum value of the variable $\Delta$ throughout the execution of the algorithm. Because $\Delta$ only depends on $C$, it follows that $\hat{\Delta}$ only depends on ${\cal C}$, and therefore $\hat{\Delta}$ is a function of the input matrices $A$ and $Q$.

We now discuss all the different vectors $d$ ever generated by the algorithm. In each recursive call, the algorithm adds a new entry to the vector $d$, this entry is either from the set $\{b_j - \alpha \cdot n \cdot \hat{\Delta}^2, b_j\}$ or from the set $\{-n^2\hat{\Delta}^4\alpha, \ldots, n^2 \hat{\Delta}^4\alpha\}$. Thus, any vector $d$ ever generated by the algorithm is at $\ell_\infty$ distance at most $n^2 \hat{\Delta}^4\alpha$ from some vector whose entries are either $0$ or equal to $b_j$ for some $j \leq m$. Given the vector $b$ and the integer $n$, we define the vector set  ${\cal D}(b,n)$ be the set of all integer vectors in at most $n$ dimensions with entries either $0$ or equal to $b_j$ for some $j \leq m$. Observe that $|{\cal D}(b,n)| \leq (m+1)^{n}$. We have that every vector $d$ ever generated by the algorithm is at $\ell_\infty$ distance at most $n^2 \hat{\Delta}^4\alpha$ from some vector in ${\cal D}(b,n)$.

The algorithm of Theorem~\ref{thm:main} generates potential solutions $x$ to the input IQP by finding a (not necessarily integral) solution $x_0$ to the linear system $Cx = d$, and then lists integral vectors within $\ell_1$-distance at most $\hat{\Delta}^2 \cdot n$ from $x_0$. From Theorem~\ref{thm:main} it follows that if the input IQP is feasible and bounded, then one of the listed vectors $x$ is in fact an optimum solution to the IQP. The above discussion proves the following lemma.

\begin{lemma}\label{lem:prepare}
Given an $n \times n$ integer matrix $Q$, and an $m \times n$ integer matrix $A$, let ${\cal C}$ be the set of matrices $C$ generated by the algorithm of Theorem~\ref{thm:main} when run on the IQP
\begin{eqnarray*}
\mbox{Minimize } x^TQx \\
\mbox{ subject to:~~} Ax \leq 0 \\
x \in \mathbb{Z}^n,
\end{eqnarray*}
and let $\hat{\Delta}$ be the maximum absolute value of the determinant of a square submatrix of a matrix $C \in {\cal C}$. Then, for any $m$-dimensional integer vector $b$ such that the IQP~(\ref{eqn:prob1}) is feasible and bounded, there exists a $C \in {\cal C}$, a vector $d_0 \in {\cal D}(b,n)$, and an integer vector $d$ at $\ell_\infty$ distance at most $n^2 \hat{\Delta}^4\alpha$ from $d_0$ such that the following is satisfied. For any $x_0$ such that $Cx_0 = d$, there exists an integer vector $x^*$ at $\ell_1$ distance at most $\hat{\Delta}^2 \cdot n$ from $x_0$ such that $x^*$ is an optimal solution to the IQP~(\ref{eqn:prob1}).
\end{lemma}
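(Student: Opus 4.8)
The plan is to read the statement as bookkeeping about the recursion of the algorithm of Theorem~\ref{thm:main}, and to turn the informal discussion preceding the lemma into an induction on the recursion tree. Run that algorithm on the IQP~(\ref{eqn:prob1}) with the given vector $b$; after the reduction of Lemma~\ref{lem:rowReduce} and the conversion to the form~(\ref{eqn:prob}) (for which one may take $C$ with no rows) it recurses, and along each recursive call the row appended to $C$ is either a row $a_j^T$ of $A$ or the row $2y_i^TQ$, where $y_i$ is a vector of the integral nullspace basis of the current $C$ produced by Cramer's rule. Crucially, none of these depend on $b$ or on the current $d$. So I would first argue, by induction on recursion depth, that the family $\mathcal{C}$ of matrices ever produced is a function of $A$ and $Q$ alone; in particular it coincides with the family produced when the algorithm is run on $Ax \le 0$, so $\hat{\Delta}$ is well defined and equals the largest value the variable $\Delta$ takes during the run, and $\Delta \le \hat{\Delta}$ holds at every node. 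The recursion is finite (Theorem~\ref{thm:main}), so $\mathcal{C}$ is finite.

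Next I would track the vectors $d$. Each recursive call appends one coordinate to $d$: in case~\ref{case:shallow} of Lemma~\ref{lem:mainStruct} the new coordinate lies in $\{b_j - \alpha n \Delta^2,\ldots,b_j\}$, hence within $\ell_\infty$-distance $n^2\hat{\Delta}^4\alpha$ of the entry $b_j$ of $b$ (the row reduction only deletes rows, so every $b_j$ occurring here is an entry of the original $b$); in case~\ref{case:deep} it lies in $\{-n^2\Delta^4\alpha,\ldots,n^2\Delta^4\alpha\}$, hence within $n^2\hat{\Delta}^4\alpha$ of $0$. By induction, every generated $d$ is at $\ell_\infty$-distance at most $n^2\hat{\Delta}^4\alpha$ from a vector of $\mathcal{D}(b,n)$ (pad the base values $0$ and $b_j$ coordinate by coordinate), and $|\mathcal{D}(b,n)| \le (m+1)^n$.

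It remains to locate an optimal solution. Assuming~(\ref{eqn:prob1}) is feasible and bounded, let $\mathrm{OPT}$ be its value; it has an integral optimal solution. I would prove, by induction on the nullspace dimension $r$ of the current matrix $C$, the following: \emph{if the current node $(C,d)$ admits an integral point $\bar{x}$ with $C\bar{x} = d$, $A\bar{x} \le b$ and $\bar{x}^TQ\bar{x} = \mathrm{OPT}$, then some descendant node $(\bar{C},\bar{d})$ satisfies: for every $x_0$ with $\bar{C}x_0 = \bar{d}$ there is an integral $x^*$ with $|x^* - x_0|_1 \le \hat{\Delta}^2 n$ that is feasible for~(\ref{eqn:prob1}) with value $\mathrm{OPT}$.} For $r = 0$ the system $\bar{C}x = \bar{d} := d$ has the unique solution $x_0 = \bar{x}$, and $x^* := \bar{x}$ works. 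For $r \ge 1$, $\bar{x}$ witnesses that the sub-IQP at $(C,d)$ has optimum $\mathrm{OPT}$ and an optimal solution, so Lemma~\ref{lem:mainStruct} applies to it. If every $y_i \in Y$ is linearly dependent on the rows of $C$ and all optimal solutions of the sub-IQP are deep, then by the rounding construction in the proof of case~\ref{case:search} (setting $x^\star = \hat{x} + Y\lfloor\lambda\rfloor$ where $x_0 = \hat{x} + Y\lambda$), \emph{for every} such $x_0$ one gets an optimal $x^\star$ with $|x^\star - x_0|_1 \le \Delta^2 n \le \hat{\Delta}^2 n$, and I take $\bar{C} = C$, $\bar{d} = d$; every optimal solution of the sub-IQP is feasible for~(\ref{eqn:prob1}) with value $\mathrm{OPT}$. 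Otherwise case~\ref{case:shallow} or case~\ref{case:deep} holds, so the algorithm makes a recursive call in which some optimal solution of the sub-IQP --- an integral point feasible for~(\ref{eqn:prob1}) with value $\mathrm{OPT}$ --- satisfies the newly added equality, and the nullspace dimension of the child drops to $r - 1$; I apply the induction hypothesis there. Applying this at the root (where $\bar{x}$ is any integral optimal solution of~(\ref{eqn:prob1})) yields a $\bar{C} \in \mathcal{C}$ and a generated $\bar{d}$; combining with the previous paragraph gives the required $d_0 \in \mathcal{D}(b,n)$ with $|\bar{d} - d_0|_\infty \le n^2\hat{\Delta}^4\alpha$, which proves the lemma.

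I expect the only real difficulty to be the universal quantifier ``for any $x_0$'' in the conclusion: one application of Lemma~\ref{lem:mainStruct} only produces an $x^\star$ close to a single prescribed $x_0$, and the lemma's trichotomy can in principle send different $x_0$'s into different cases. The fix is to observe that the condition deciding case~\ref{case:search} --- ``all optimal solutions of the sub-IQP are deep and every $y_i \in Y$ is linearly dependent on the rows of $C$'' --- does not mention $x_0$ at all; so whenever it holds, case~\ref{case:search}, with its explicit rounding formula, is available uniformly over all $x_0$, and that is exactly the node at which I stop descending. I would also take care that each sub-IQP visited remains feasible and bounded (it contains the carried point $\bar{x}$ of value $\mathrm{OPT}$ and sits inside the feasible region of~(\ref{eqn:prob1})), so that Lemma~\ref{lem:mainStruct} is applicable at every step.
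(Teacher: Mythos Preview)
Your proposal is correct and follows the paper's approach, which is nothing more than the informal discussion immediately preceding the lemma (``The above discussion proves the following lemma''). Your treatment is in fact more careful than the paper's: you explicitly justify the universal quantifier ``for any $x_0$'' by descending the recursion until you reach either a node of nullspace dimension zero or a node where the hypothesis of case~\ref{case:search} (all optimal solutions deep, all $y_i^TQ$ dependent on the rows of $C$) holds---a condition that is indeed independent of $x_0$---whereas the paper leaves this point implicit.
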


The algorithm in Theorem~\ref{thm:main} only adds a row to the matrix $C$ if this row is linearly independent from the rows of $C$. Hence all matrices in ${\cal C}$ have full row rank, and therefore they have right inverses. Specifically, for each $C \in {\cal C}$, we define $C_{right}^{-1} = C^T(CC^T)^{-1}$. It follows that $C C_{right}^{-1} = I$ and that therefore, $x_0 = C_{right}^{-1}d$ is a solution to the system $Cx = d$ for any vector $d$. Note that $C_{right}^{-1}$ is not necessarly an integer matrix, however Cramer's rule~\cite{lay59linear} shows that $C_{right}^{-1}$ is a matrix with rational entries with common denominator $\det(CC^T)$. This leads to the following lemma.

\begin{lemma}\label{lem:prepare2}
There exists an algorithm that given an $n \times n$ integer matrix $Q$, and an $m \times n$ integer matrix $A$ outputs a set ${\cal C}^{-1}$ of rational matrices, and a set ${\cal V}$ of rational vectors with the following property. For any $m$-dimensional integer vector $b$ such that the IQP~(\ref{eqn:prob1}) defined by $Q$, $A$ and $b$ is feasible and bounded, there exists a matrix $C^{-1}_{right} \in {\cal C}$ a vector $v \in {\cal V}$ and a vector $d_0 \in {\cal D}(b,n)$, such that  
$x^* = C^{-1}_{right}d_0 + v$ is an optimal solution to the IQP~(\ref{eqn:prob1}).
\end{lemma}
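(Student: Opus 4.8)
The plan is to obtain Lemma~\ref{lem:prepare2} as a bookkeeping consequence of Lemma~\ref{lem:prepare} together with the right-inverse construction described just above it. Lemma~\ref{lem:prepare} already says that, for a feasible and bounded IQP~(\ref{eqn:prob1}), some optimum $x^*$ lies within $\ell_1$-distance $\hat{\Delta}^2 n$ of \emph{any} solution $x_0$ of a system $Cx=d$, where $C$ belongs to the $b$-independent family $\mathcal{C}$ and $d$ lies within $\ell_\infty$-distance $n^2\hat{\Delta}^4\alpha$ of some $d_0 \in \mathcal{D}(b,n)$. The idea is to take the canonical solution $x_0 = C^{-1}_{right}d$, expand $d = d_0 + (d-d_0)$, and write $x^* = x_0 + (x^*-x_0)$; this rearranges into $x^* = C^{-1}_{right}d_0 + v$ with $v = C^{-1}_{right}(d-d_0) + (x^*-x_0)$, where both displacement terms range over sets that depend only on $Q$ and $A$.

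Concretely, first I would compute $\mathcal{C}$ and $\hat{\Delta}$: as observed before Lemma~\ref{lem:prepare}, the family $\mathcal{C}$ generated by the algorithm of Theorem~\ref{thm:main} on the IQP with $Ax\le 0$ depends only on $Q$ and $A$, and it is finite because each recursive call strictly decreases the nullspace dimension, which starts out at most $n$; so $\mathcal{C}$ and hence $\hat{\Delta}$ can be produced by enumerating the recursion structure. Since every $C \in \mathcal{C}$ has full row rank, set $C^{-1}_{right} = C^T(CC^T)^{-1}$ and $\mathcal{C}^{-1} = \{C^{-1}_{right} : C \in \mathcal{C}\}$; by Cramer's rule these are rational matrices with denominator $\det(CC^T)$, and there are finitely many of them. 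Then I would define $\mathcal{V}$: writing $r_C$ for the number of rows of $C$, let $E_C = \{e \in \mathbb{Z}^{r_C} : |e|_\infty \le n^2\hat{\Delta}^4\alpha\}$ and $W = \{w \in \mathbb{Z}^n : |w|_1 \le \hat{\Delta}^2 n\}$, and put $\mathcal{V} = \bigcup_{C \in \mathcal{C}} \{C^{-1}_{right}e + w : e \in E_C,\ w \in W\}$. This is a finite set of rational vectors, computable from $Q$ and $A$ alone.

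To verify the property, given $b$ making the IQP feasible and bounded, apply Lemma~\ref{lem:prepare} to get $C \in \mathcal{C}$, $d_0 \in \mathcal{D}(b,n)$, and an integer vector $d$ with $|d-d_0|_\infty \le n^2\hat{\Delta}^4\alpha$ such that for any $x_0$ with $Cx_0 = d$ there is an integer $x^*$ with $|x^*-x_0|_1 \le \hat{\Delta}^2 n$ that is optimal. Choose $x_0 = C^{-1}_{right}d$, which satisfies $Cx_0 = (CC^{-1}_{right})d = d$, and set $e = d - d_0 \in E_C$ and $w = x^* - x_0 \in W$. Then
$$x^* = x_0 + w = C^{-1}_{right}(d_0 + e) + w = C^{-1}_{right}d_0 + (C^{-1}_{right}e + w),$$
so with $v := C^{-1}_{right}e + w \in \mathcal{V}$ we get $x^* = C^{-1}_{right}d_0 + v$, as required.

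The one point requiring care — really the only content beyond the substitution — is the claim that $\mathcal{C}^{-1}$ and $\mathcal{V}$ are independent of $b$: every $b$-dependent object in the expansion must be either one of the finitely many pre-enumerated right inverses, one of the vectors in $\mathcal{D}(b,n)$ (which the statement explicitly allows to depend on $b$), or a displacement confined to a ball whose radius depends only on $n$, $\alpha$, and $\hat{\Delta}$, hence only on $Q$ and $A$. A minor bookkeeping wrinkle is that the dimension of $d$ (and of $e$) equals the number of rows of the chosen $C$, which varies across $\mathcal{C}$, so $\mathcal{V}$ must be assembled as a union over $C \in \mathcal{C}$ rather than in a single fixed dimension.
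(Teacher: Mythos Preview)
Your overall approach matches the paper's: compute $\mathcal{C}$, take right inverses, and package the two displacement terms into a precomputed set $\mathcal{V}$. However, there is a genuine gap in your definition of $W$. You set $W = \{w \in \mathbb{Z}^n : |w|_1 \le \hat{\Delta}^2 n\}$ and then claim $w := x^* - x_0 \in W$. But $x_0 = C^{-1}_{right}d$ is only a \emph{rational} vector (you yourself note that $C^{-1}_{right}$ has denominator $\det(CC^T)$), so $x^* - x_0$ need not be an integer vector, and hence generally $w \notin W$. As written, your $\mathcal{V}$ is too small to contain the required $v$.

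The paper repairs exactly this point: it lets the analogue of your $w$ (called $v_1$ there) range over \emph{rational} vectors with $|v_1|_1 \le \hat{\Delta}^2 n$ whose entries all have denominator $\det(CC^T)$, for the particular $C$ under consideration. This keeps the set finite (so $\mathcal{V}$ is still finite and computable from $Q$ and $A$) while guaranteeing that $x^* - x_0$ actually lies in it, since $x^*$ is integral and every entry of $x_0 = C^{-1}_{right}d$ has denominator $\det(CC^T)$ by Cramer's rule. Once you make this correction, your argument is identical to the paper's.
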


\begin{proof}
The algorithm starts by applying the algorithm of Lemma~\ref{lem:prepare} to obtain a set ${\cal C}$ of matrices and the integer $\hat{\Delta}$. The algorithm then computes the set ${\cal C}^{-1}$ of matrices, defined as ${\cal C}^{-1} = \{C_{right}^{-1} ~:~ C \in {\cal C}\}$. 
Next the algorithm constructs the set ${\cal V}$ of vectors as follows. For every matrix $C_{right}^{-1} \in {\cal C}^{-1}$, we have that $C_{right}^{-1} = C^T(CC^T)^{-1}$ for some $C \in {\cal C}$. For every integer vector $v_0$ with $|v_0|_\infty \leq n^2 \hat{\Delta}^4\alpha$, and every {\em rational} vector $v_1$ with $|v_1|_1 \leq \hat{\Delta}^2 \cdot n$ and denominator $\det(CC^T)$ in every entry, the algorithm adds $C_{right}^{-1}v_0 + v_1$ to ${\cal V}$. It remains to prove that ${\cal C}^{-1}$ and ${\cal V}$ satisfy the statement of the lemma.

Let $b$ be an m-dimensional integer vector such that the IQP~(\ref{eqn:prob1}) defined by $Q$, $A$ and $b$ is feasible and bounded. By Lemma~\ref{lem:prepare} we have that there exists $C \in {\cal C}$, a vector $d_0 \in {\cal D}(b,n)$, and a vector $d$ at $\ell_\infty$ distance at most $n^2 \hat{\Delta}^4\alpha$ from $d_0$ such that the following is satisfied. For any $x_0$ such that $Cx_0 = d$, there exists an integer vector $x^*$ at $\ell_1$ distance at most $\hat{\Delta}^2 \cdot n$ from $x_0$ such that $x^*$ is an optimal solution to the IQP~(\ref{eqn:prob1}).

Let $C \in {\cal C}$, $d_0 \in {\cal D}(b,n)$, and $d$ be as guaranteed by Lemma~\ref{lem:prepare}, and set $v_0 = d - d_0$. We have that $|v_0|_\infty \leq n^2 \hat{\Delta}^4\alpha$. Let $C^{-1}_{right}$ be the right inverse of $C$ in ${\cal C}^{-1}$, and let $x_0 = C^{-1}_{right}d = C^{-1}_{right}d_0 + C^{-1}_{right}v_0$. We have that ${\cal C}x_0 = d$. 

Thus, there exists an integer vector $x^*$ at $\ell_1$ distance at most $\hat{\Delta}^2 \cdot n$ from $x_0$ such that $x^*$ is an optimal solution to the IQP~(\ref{eqn:prob1}). Let $v_1 = x^* - x_0$, we have that $|v_1|_1 \leq \hat{\Delta}^2 \cdot n$. Further, $x^*$ is an integer vector, while $x_0 = C^{-1}_{right}d$ is a rational vector whose entries all have denominator $\det(CC^T)$. It follows that all entries of $v_1$ have denominator $\det(CC^T)$. Thus $v = C^{-1}_{right}v_0 + v_1 \in {\cal V}$ and $C^{-1}_{right}d_0 + v$ is an optimal solution to the IQP~(\ref{eqn:prob1}), completing the proof.
\end{proof}

Armed with Lemma~\ref{lem:prepare2} we are ready to prove the main result of this section.

\begin{theorem}\label{thm:main2} There exists an algorithm that given an instance of {\sc Integer Quadratic Programming}, runs in time $f(n,\alpha)L^{O(1)}$, and determines whether the instance is infeasible, feasible and unbounded, or feasible and bounded. If the instance is feasible and bounded the algorithm outputs an optimal solution.
\end{theorem}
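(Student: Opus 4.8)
The plan is to combine Lemma~\ref{lem:prepare2} with a single call to the algorithm of Theorem~\ref{thm:main} on a suitably box-truncated instance.

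\emph{Step 1: feasibility and the best candidate.} First I would decide feasibility of $Ax\leq b$, $x\in\mathbb{Z}^n$ using Lenstra's algorithm~\cite{Lenstra83}; if the system is infeasible, report that and stop. Otherwise, apply the algorithm of Lemma~\ref{lem:prepare2} to $Q$ and $A$ to get the sets $\mathcal{C}^{-1}$ and $\mathcal{V}$, compute $\mathcal{D}(b,n)$ from $b$, and form the (parameter-boundedly many) candidate vectors $x=C^{-1}_{right}d_0+v$ over $C^{-1}_{right}\in\mathcal{C}^{-1}$, $v\in\mathcal{V}$, $d_0\in\mathcal{D}(b,n)$. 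Throw away the candidates that are non-integral or violate $Ax\leq b$, and let $x_{best}$ be a surviving candidate minimizing $x^{T}Qx$, with value $\beta^{\star}:=x_{best}^{T}Qx_{best}$. If no candidate survives, then by Lemma~\ref{lem:prepare2} the feasible instance cannot be bounded, so report ``unbounded'' and stop. By Lemma~\ref{lem:prepare2}, whenever the instance is feasible and bounded, $x_{best}$ is an optimal solution and $\beta^{\star}$ is the optimum.

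\emph{Step 2: bounded versus unbounded.} The remaining task is to decide, for a feasible instance for which a candidate survived, whether it is bounded. I would use the following equivalence: a feasible IQP of the form~(\ref{eqn:prob1}) is unbounded if and only if there is a feasible integer vector $x$ with $x^{T}Qx<\beta^{\star}$. (If such an $x$ exists, $x_{best}$ is not optimal, so by Lemma~\ref{lem:prepare2} the instance is not bounded, hence unbounded; conversely, an unbounded instance has feasible vectors of arbitrarily small objective value.) Since $Q$ and $x$ are integral, $x^{T}Qx<\beta^{\star}$ is the same as $x^{T}Qx\leq\beta^{\star}-1$. I would then choose an explicit integer $M^{\star}$, of bit-length $L^{O(1)}$ with the exponent and the constant depending only on $n$ and $\alpha$, large enough that (i) $|x_{best}|_{\infty}\leq M^{\star}$, and (ii) whenever $\{x\in\mathbb{Z}^n : Ax\leq b,\ x^{T}Qx\leq\beta^{\star}-1\}$ is nonempty, it contains a vector of $\ell_{\infty}$-norm at most $M^{\star}$. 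Finally, run the algorithm of Theorem~\ref{thm:main} on
\begin{eqnarray*}
\mbox{Minimize } x^{T}Qx \\
\mbox{ subject to:~~} Ax\leq b, \quad -M^{\star}\mathbf{1}\leq x\leq M^{\star}\mathbf{1} \\
x\in\mathbb{Z}^n .
\end{eqnarray*}
Its feasible region is nonempty (it contains $x_{best}$) and bounded, hence not unbounded, so Theorem~\ref{thm:main} returns an optimal solution; let $\omega$ be its value. This region is contained in $\{Ax\leq b\}\cap\mathbb{Z}^n$ and contains $x_{best}$, so $\omega\leq\beta^{\star}$ always, and $\omega\geq\beta^{\star}$ when the original instance is bounded (as then its optimum equals $\beta^{\star}$); and if the original instance is unbounded, then by (ii) the box contains a feasible vector of value at most $\beta^{\star}-1$, so $\omega\leq\beta^{\star}-1$. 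Thus: if $\omega<\beta^{\star}$ report ``unbounded'', and if $\omega=\beta^{\star}$ report ``bounded'' and output $x_{best}$.

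\emph{Running time and the main obstacle.} Lenstra's algorithm runs in $f(n)L^{O(1)}$; Step~1's enumeration produces $g(n,\alpha)$ candidates, each a rational vector with $L^{O(1)}$-bit entries, handled in $L^{O(1)}$ time. The truncated IQP has encoding length $L+O(n\log M^{\star})=L^{O(1)}$ and its constraint matrix has largest entry $\max(\alpha,1)$, so Theorem~\ref{thm:main} solves it in $f(n,\alpha)L^{O(1)}$; overall the running time is $f(n,\alpha)L^{O(1)}$. The step I expect to be the main obstacle is establishing bound~(ii) on $M^{\star}$: one needs that the integer points of a system of the linear inequalities $Ax\leq b$ together with a single quadratic inequality $x^{T}Qx\leq\beta^{\star}-1$, if that set is nonempty, cannot all have enormous norm --- a quantitative bound on integer points in (non-convex, since $Q$ need not be positive semidefinite) degree-two semialgebraic sets, in the spirit of the bit-size bounds behind NP-membership of {\sc Integer Quadratic Programming}~\cite{PiaDM16}. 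An alternative route to the same conclusion is to analyse the recession cone $\{y:Ay\leq0\}$: the feasible instance is unbounded iff either there is a \emph{real} $y$ with $Ay\leq0$ and $y^{T}Qy<0$ (a check depending only on $A$ and $Q$, hence decidable in time bounded purely in $n$ and $\alpha$), or there are a feasible integer $x_0$ and an integer $y$ with $Ay\leq0$, $y^{T}Qy=0$ and $x_0^{T}Qy<0$; this ``linear-escape'' case is the delicate one, since the set of admissible directions $y$ need not be polyhedral, and it is precisely here that the parameterized bound on how far one must search is required.
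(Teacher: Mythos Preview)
Your outline is a genuinely different route from the paper's, and it is \emph{almost} complete, but the step you yourself flag as ``the main obstacle'' --- the existence of the bound $M^\star$ in~(ii) --- is left as an appeal to external results rather than proved. Since the entire correctness of Step~2 hinges on~(ii), this is a real gap. It is fillable: the NP-membership proof of~\cite{PiaDM16} does give an explicit polynomial bit-size bound on a witness to the feasibility system $\{Ax\leq b,\ x^TQx\leq \gamma,\ x\in\mathbb{Z}^n\}$, and invoking that bound (with $\gamma=\beta^\star-1$) would supply exactly the $M^\star$ you need. But as written you only gesture at this (``in the spirit of''), and your alternative recession-cone sketch ends at the same missing bound.

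The paper sidesteps this obstacle entirely, and it is worth seeing how. Instead of applying Lemma~\ref{lem:prepare2} to $(Q,A)$ and then searching for an external box bound, the paper applies Lemma~\ref{lem:prepare2} to $(Q,A')$ where $A'$ is $A$ with the $2n$ rows $\pm I$ appended. The point is that $A'$ does \emph{not} depend on the box size $\lambda$; only the right-hand side $b'$ does. Hence the sets $\mathcal{C}^{-1}$ and $\mathcal{V}$ are fixed once and for all, and for every $\lambda$ the optimum of the $\lambda$-boxed IQP is of the form $C^{-1}_{right}d_0+v$ with $d_0\in\mathcal{D}(b',n)$. Since every entry of $b'$ is either an entry of $b$ or $\lambda$, one can write $d_0=d_b+\lambda d_1$ with $d_b\in\mathcal{D}(b,n)$ and $d_1\in\mathcal{D}(\mathbf{1},n)$, so the candidate optima lie on finitely many affine lines $\lambda\mapsto \lambda(C^{-1}_{right}d_1)+(C^{-1}_{right}d_b+v)$. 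The original IQP is unbounded iff one of the resulting \emph{univariate} integer quadratic programs in $\lambda$ is unbounded, which is trivial to test. This is self-contained: it needs no bit-size bound from~\cite{PiaDM16}, and it reuses the machinery of Lemma~\ref{lem:prepare2} rather than treating it as a black box for a single fixed $b$.

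Two minor remarks on your write-up: the separate call to Lenstra in Step~1 is unnecessary, since the algorithm of Theorem~\ref{thm:main} already reports infeasibility and otherwise returns a feasible vector; and you should apply Lemma~\ref{lem:rowReduce} before invoking Lemma~\ref{lem:prepare2} so that $|\mathcal{D}(b,n)|\leq(m+1)^n$ is bounded in terms of $n$ and $\alpha$.
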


\begin{proof}
The algorithm of Theorem~\ref{thm:main} is sufficient to determine whether the input instance is feasible, and to find an optimal solution if the instance is feasible and bounded. Thus, to complete the proof it is sufficient to give an algorithm that determines whether the input IQP is unbounded. We will assume that the input IQP is given on the form~(\ref{eqn:prob1}).

Suppose now that the input IQP is unbounded. For a positive integer $\lambda$, consider adding the linear constraints 
$Ix \leq \lambda{\bf 1}$, and $-Ix \leq \lambda{\bf 1}$ to the IQP. Here {\bf 1} is the $n$-dimensional all-ones vector. In other words, we consider the IQP where the goal is to minimize $x^TQx$, subject to $A'x \leq b'$ where $A'$ is obtained from $A$ by adding $2n$ new rows containing $I$ and $-I$, and $b'$ is obtained from $b$ by adding $2n$ new entries with value $\lambda$. There exists a $\lambda_0$ such that for every $\lambda \geq \lambda_0$ this IQP is feasible. Further, for every $\lambda$ the resulting IQP is bounded. Note that the matrix $A'$ {\em does not depend on $\lambda$.}

We now apply Lemma~\ref{lem:prepare2} on $Q$ and $A'$, and obtain a set ${\cal C}^{-1}$ of rational matrices, and a set ${\cal V}$ of rational vectors. We have that for every $\lambda \geq \lambda_0$, there exists a matrix $C^{-1}_{right} \in {\cal C}$ a $v \in {\cal V}$ and a $d_0 \in {\cal D}(b',n)$, such that  $x^* = C^{-1}_{right}d_0 + v$ is an optimal solution to the IQP defined by $Q$, $A'$ and $b'$. Furthermore, the entries of $b'$ are either entries of $b$, $0$ or equal to $\lambda$. Hence $d_0 = d_b + \lambda d_1$ for $d_b \in {\cal D}(b,n)$ and $d_1 \in {\cal D}({\bf 1},n)$. 
We can conclude that there exists a matrix $C^{-1}_{right} \in {\cal C}$ a vector $v \in {\cal V}$, a vector $d_b \in {\cal D}(b,n)$ and a vector $d_1 \in {\cal D}({\bf 1},n)$, such that  
$$C^{-1}_{right}(d_b + \lambda d_1) + v = \lambda \cdot (C^{-1}_{right} d_1) +  (C^{-1}_{right} d_b + v)$$
is an optimal solution to the IQP defined by $Q$, $A'$ and $b'$. 
Thus, the input IQP is unbounded if and only if there exists a choice for $C^{-1}_{right} \in {\cal C}$, $v \in {\cal V}$, $d_b \in {\cal D}(b,n)$ and $d_1 \in {\cal D}({\bf 1},n)$, such that following {\em univariate} quadratic program with integer variable $\lambda$ is unbounded.
\begin{eqnarray*}
\mbox{Minimize } x^TQx \\
\mbox{ subject to:~~} Ax \leq b \\
x =\lambda \cdot (C^{-1}_{right} d_1) +  (C^{-1}_{right} d_b + v) \\
\lambda \in \mathbb{Z}.
\end{eqnarray*}

Hence, to determine whether the input IQP is unbounded, it is sufficient to iterate over all choices of  $C^{-1}_{right} \in {\cal C}$, $v \in {\cal V}$, $d_b \in {\cal D}(b,n)$ and $d_1 \in {\cal D}({\bf 1},n)$, and determine whether the resulting univariate quadratic program is unbounded. Since the number of such choices is upper bounded by a function of $Q$ and $A$, and univariate (both integer and rational) quadratic programming is trivially decidable, the theorem follows.
\end{proof}




\section{Optimal Linear Arrangement Parameterized by Vertex Cover}
We assume that a vertex cover $C$ of $G$ of size at most $k$ is given as input. The remaining set of vertices $I = V(G) - C$ forms an independent set. Furthermore, $I$ can be partitioned into at most $2^k$ sets as follows: for each subset $S$ of $C$ we define $I_S = \{v \in I ~:~ N(v) = S\}$. For every vertex $v \in I$ we will refer to $N(v)$ as the {\em type} of $v$, clearly there are at most $2^k$ different types.

Let $C = \{c_1, c_2, \ldots, c_k\}$. By trying all $k!$ permutations of $C$ we may assume that the optimal solution $\sigma$ satisfies $\sigma(c_i) < \sigma(c_{i+1})$ for every $1 \leq i \leq k-1$. For every $i$ between $1$ and $k-1$ we define the $i$'th {\em gap of} $\sigma$ to be the set $B_i$ of vertices appearing between $c_i$ and $c_{i+1}$ according to $\sigma$. The $0$'th gap $B_0$ is the set of all vertices appearing before $c_1$, and the $k$'th gap $B_k$ is the set of vertices appearing after $c_k$. We will also refer to $B_i$ as ``gap $i$'' or ``gap number $i$''. For every gap $B_i$ and type $S \subseteq C$ of vertices we denote by $I_S^i$ the set $B_i \cap I_S$ of vertices of type $S$ appearing in gap $i$.

We say that an ordering $\sigma$ is {\em homogenous} if, for every gap $B_i$ and every type $S \subseteq C$ the vertices of $I_S^i$ appear consecutively in $\sigma$. Informally this means that inside the same gap the vertices from different sets $I_{S}$ and $I_{S'}$ ``don't mix''.  Fellows et al.~\cite{FellowsHRS13} show that there always exists an optimal solution that is homegenous.



\begin{lemma}\cite{FellowsHRS13} \label{lem:olahom}
There exists a homogenous optimal linear arrangement of $G$.
\end{lemma}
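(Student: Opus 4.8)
The plan is to take an optimal linear arrangement $\sigma$ that is "as homogeneous as possible" — say, one minimizing the number of adjacent pairs $u,v$ with $\sigma(v)=\sigma(u)+1$ belonging to different type classes $I_S, I_{S'}$ inside a common gap $B_i$ — and argue that this quantity must be zero. Suppose not; then inside some gap $B_i$ there are two consecutive blocks of independent vertices of types $S\neq S'$ that are adjacent in $\sigma$, and we want an exchange argument that does not increase $val(\sigma,G)$ and strictly decreases the defect. First I would isolate a maximal run $R$ of consecutive positions inside $B_i$ occupied only by vertices of $I$ (so $R$ touches no vertex of $C$ except through the fixed endpoints $c_i,c_{i+1}$, which lie outside $R$). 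Within $R$, every vertex $w$ of type $S$ contributes $|\sigma(w)-\sigma(c_j)|$ for each $c_j\in S$, and these are the only edges incident to $w$; so the cost restricted to $R$ is a function only of the multiset of positions assigned to each type, via the position-to-cost contribution of each type. The key observation is that reordering the vertices within $R$ — keeping the set of occupied positions fixed — changes $val(\sigma,G)$ only through how the positions of $R$ are distributed among the type classes, and for fixed class sizes the cost is a separable sum over classes of convex-in-position terms; this lets us sort the classes so that within $R$ each class occupies a contiguous block, without increasing the cost.

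The core computation I expect to carry out is the following interchange lemma: if in $R$ we have a vertex $u$ of type $S$ immediately followed (in $\sigma$) by a vertex $v$ of type $S'\neq S$, and swapping $u$ and $v$ changes the cost by $\delta$, then $\delta$ has a sign determined by $\sum_{c\in S}\mathrm{sgn}(\sigma(u)-\sigma(c)) - \sum_{c\in S'}\mathrm{sgn}(\sigma(v)-\sigma(c))$ evaluated at the two positions — and more usefully, that among all orderings of the vertices in $R$ the cost is minimized by a "monotone" arrangement in which the types appear sorted according to a natural one-dimensional key (for instance the median, or the center of mass, of $S$ among the positions of the $c_j$'s). Since $\sigma$ is optimal, the chosen ordering inside $R$ already achieves the minimum cost over reorderings of $R$, so I may replace it by a sorted (hence homogeneous within $R$) ordering of the same cost; doing this for every maximal run $R$ in every gap yields a homogeneous arrangement of cost $\le val(\sigma,G)$, hence an optimal homogeneous arrangement.

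The main obstacle is verifying that "sorting the types within a run by the right key does not increase cost" genuinely holds — i.e., that the per-run cost, as a function of which contiguous sub-blocks of positions go to which type class, is minimized by the sorted assignment. This is an exchange/rearrangement argument in the spirit of the rearrangement inequality: one must check that swapping two adjacent mis-sorted vertices never increases the total, using that each type's contribution from a single position is a "V-shaped" (convex, piecewise-linear) function of that position with a kink determined by $S$. I would handle it by a clean adjacent-transposition argument: show that if $u$ (type $S$) precedes $v$ (type $S'$) but $S$'s key exceeds $S'$'s key, then swapping them weakly decreases $\sum_{c\in S}|\sigma(u)-\sigma(c)| + \sum_{c\in S'}|\sigma(v)-\sigma(c)|$ while leaving all other edge costs unchanged (only $u,v$ moved, and they are adjacent so no vertex jumps over a third vertex of $C$ or $I$ that would change a $|\cdot|$ term's sign beyond what is accounted for). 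Iterating such transpositions (bubble sort) terminates at a homogeneous ordering of the run with cost no larger than the original, which is exactly what is needed. Since this lemma is quoted from~\cite{FellowsHRS13}, I would present this as the natural self-contained argument and otherwise defer to that reference for the detailed case analysis of the sign bookkeeping.
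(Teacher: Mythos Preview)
The paper does not prove Lemma~\ref{lem:olahom} at all; it simply cites~\cite{FellowsHRS13}. So there is no ``paper's own proof'' to compare against, and your exchange-argument sketch is precisely the standard argument one expects (and, judging from the surrounding text, the one in~\cite{FellowsHRS13}).

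Your sketch is correct, and you can make it sharper. The ``key'' you are looking for is exactly the force $\delta(v)$ defined in the paper. The clean adjacent-transposition computation is: if $u$ (type $S$) sits at position $p$ and $v$ (type $S'$) at position $p+1$, both inside gap $B_i$, then swapping them changes the cost by exactly $\delta(v)-\delta(u)$, since every $c_j\in C$ lies either entirely to the left of $\{p,p+1\}$ or entirely to the right, and there is no edge $uv$. Hence bubble-sorting each gap by force yields a super-homogeneous arrangement of no larger cost, which in particular is homogeneous. One small point you should make explicit: different types can have equal force in the same gap, so ``sorted by force'' is not automatically homogeneous; but the same computation shows that swapping adjacent vertices of equal force costs zero, so you can freely group equal-force types into contiguous blocks afterward. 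Your hedging about ``median or center of mass'' is unnecessary --- the force is the right key and the swap identity $\Delta\text{cost}=\delta(v)-\delta(u)$ is exact, with no case analysis beyond ``left of the gap'' versus ``right of the gap''.
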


For every vertex $v$ we define the {\em force} of $v$ with respect to $\sigma$ to be
$$\delta(v) = |\{u \in N(v) ~:~ \sigma(u) > \sigma(v)\}| - |\{u \in N(v) ~:~ \sigma(u) < \sigma(v)\}|.$$
Notice that two vertices of the same type in the same gap have the same force. Fellows et al.~\cite{FellowsHRS13} in the proof of Lemma~\ref{lem:olahom} show that there exists an optimal solution that is homogenous, and where inside every gap, the vertices are ordered from left to right in non-decreasing order by their force. We will call such an ordering solution {\em super-homogenous}. As already noted, the existence of a super-homogenous optimal linear arrangement $\sigma$ follows from the proof of Lemma~\ref{lem:olahom} by Fellows et al.~\cite{FellowsHRS13}.

\begin{lemma}\cite{FellowsHRS13} \label{lem:olasuperhom}
There exists a super-homogenous optimal linear arrangement of $G$.
\end{lemma}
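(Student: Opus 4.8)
The plan is to start from a homogenous optimal arrangement, which exists by Lemma~\ref{lem:olahom}, and then sort the vertices inside each gap so that forces become non-decreasing from left to right, arguing that this rearrangement never increases the cost and that it preserves homogeneity. The first ingredient is a structural observation about forces. For a vertex $v\in B_i\cap I$ every neighbour of $v$ lies in $C$, and since $B_i$ sits between $c_i$ and $c_{i+1}$, the neighbours of $v$ lying to its left are exactly $L_v=N(v)\cap\{c_1,\dots,c_i\}$ and those to its right are exactly $R_v=N(v)\cap\{c_{i+1},\dots,c_k\}$. Hence $\delta(v)=|R_v|-|L_v|$ depends only on the type $N(v)$ and the index $i$; in particular two vertices of the same type in the same gap have equal force. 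Throughout the argument the cover vertices never move, so the positions $\sigma(c_1)<\dots<\sigma(c_k)$ and the partition of $I$ into gaps stay fixed, and the set of slots available in each gap $B_i$ stays fixed.

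Next I would decompose the cost. Every edge of $G$ is either between two cover vertices — whose contribution is untouched by reorderings inside gaps — or between a cover vertex and an independent vertex. Fixing a gap $B_i$ and a vertex $v\in B_i$ placed at position $\sigma(v)$, the contribution of the edges incident to $v$ is
$$\sum_{c\in L_v}\bigl(\sigma(v)-\sigma(c)\bigr)+\sum_{c\in R_v}\bigl(\sigma(c)-\sigma(v)\bigr) = -\delta(v)\,\sigma(v)+\kappa_v,$$
where $\kappa_v=\sum_{c\in R_v}\sigma(c)-\sum_{c\in L_v}\sigma(c)$ does not depend on where $v$ is placed within $B_i$. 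Summing over $v\in B_i$ and then over $i$ shows that the part of $val(\sigma,G)$ that depends on the orderings inside the gaps equals $-\sum_i\sum_{v\in B_i}\delta(v)\,\sigma(v)$, up to an additive constant that is the same for all arrangements with the given cover positions.

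Now I would apply the rearrangement inequality gap by gap. Since the multiset $\{\sigma(v):v\in B_i\}$ is exactly the fixed set of slots of gap $i$, the quantity $\sum_{v\in B_i}\delta(v)\,\sigma(v)$ is maximised — hence the cost minimised — precisely when the vertices of $B_i$ are listed from left to right in non-decreasing order of $\delta$. Performing this sort simultaneously in every gap of the homogenous optimal arrangement yields an arrangement of cost no larger, hence still optimal. It remains to check that homogeneity is preserved: inside a single gap all vertices of one type have the same force, so any ties between distinct types with equal force can be broken by listing the corresponding type-blocks consecutively, which keeps each $I_S^i$ contiguous. The resulting arrangement is therefore homogenous and, by construction, has non-decreasing force inside every gap, i.e.\ it is super-homogenous.

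The main obstacle is the cost bookkeeping in the second step: one must isolate exactly the term that varies with the in-gap order and verify that it is the linear functional $-\sum_i\sum_{v\in B_i}\delta(v)\,\sigma(v)$, so that the rearrangement inequality applies cleanly, while also confirming that the tie-breaking used in the sort does not destroy homogeneity. Everything else is routine.
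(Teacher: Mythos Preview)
Your argument is correct. The paper does not actually prove this lemma; it merely records that super-homogeneity already falls out of the proof of Lemma~\ref{lem:olahom} in Fellows et al.~\cite{FellowsHRS13}, so there is no in-paper argument to compare against. Your route --- fix the cover positions, note that each $v\in B_i$ contributes $-\delta(v)\,\sigma(v)$ plus a term independent of the in-gap placement, and then apply the rearrangement inequality gap by gap with type-respecting tie-breaks --- is a clean, self-contained derivation that matches the intuition the paper alludes to. One minor remark: you do not really need the starting arrangement to be homogenous, only optimal with the given left-to-right order on $C$; the sort-by-force step with type-based tie-breaking produces homogeneity on its own, since within a fixed gap equal type forces equal $\delta$.
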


Notice that a super-homogenous linear arrangement $\sigma$ is completely defined (up to swapping positions of vertices of the same type) by specifying for each $i$ and each type $S$ the size $|I_S^i|$. For each gap $i$ and each type $S$ we introduce a variable $x_S^i \in \mathbb{Z}$ representing $|I_S^i|$. Clearly the variables $x_S^i$ need to satisfy 
\begin{align}\label{constr:nonneg}\forall i \leq k, \forall S \subseteq C ~~~~ x_S^i \geq 0\end{align}
and 
\begin{align}\label{constr:sumvert}\forall S \subseteq C ~~~~ \sum_{i=0}^k x_S^i = |I_S|.\end{align}
On the other hand, every assignment to the variables satisfying these (linear) constraints corresponds to a  super-homogenous linear arrangement $\sigma$ of $G$ with $|I_S^i| = x_S^i$ for every type $S$ and gap $i$. 

We now analyze the cost of $\sigma$ as a function of the variables. The goal is to show that $val(\sigma, G)$ is a quadratic function of the variables with coefficients that are bounded from above by a function of $k$. The coefficients of this quadratic function are not integral, but {\em half-integral}, namely integer multiples of $\frac{1}{2}$. The analysis below is somewhat tedious, but quite straightforward. For the analysis it is helpful to re-write $val(\sigma,G)$. For a fixed ordering $\sigma$ of the vertices we say that an edge $uv$ {\em flies over} the vertex $w$ if
$$\min(\sigma(u), \sigma(v)) < \sigma(w) < \max(\sigma(u), \sigma(v)).$$
We define the ``fly over'' relation $\sim$ for edges and vertices, i.e $uv \sim w$ means that $uv$ flies over $w$. Since an edge $uv$ with $\sigma(u) < \sigma(v)$ flies over the $\sigma(v)-\sigma(u)-1$ vertices appearing between  $\sigma(u)$ and $\sigma(v)$ it follows that
$$val(\sigma,G) = |E(G)| + \sum_{uv \in E(G)} \sum_{\substack{w \in V(G) \\ uv \sim w}} 1.$$

We partition the set of edges of $G$ into several subsets as follows. The set $E_C$ is the set of all edges with both endpoints in $C$. For every gap $i$ with $i \in \{0,\ldots,k\}$, every $j \in \{1,\ldots,k\}$ and every $S \subseteq C$ we denote by $E_{i,j}^S$ the set of edges whose one endpoint is in $I_S^i$ and the other is $c_j$. Notice that $|E_{i,j}^S|$ is either equal to $x_S^i$ or to $0$ depending on whether vertices of type $S$ are adjacent to $c_j$ or not. We have that

\begin{align}\label{eqn:objfunc}
val(\sigma,G) = |E(G)| + \sum_{c_ic_j \in E_C} \sum_{\substack{w \in V(G) \\ c_ic_j \sim w}} 1 + \sum_{i,j,S} \sum_{uc_j \in E_{i,j}^S} \sum_{\substack{w \in V(G) \\ uc_j \sim w}} 1.
\end{align}
Further, for each edge $c_ic_j \in E_C$ (with $i < j$) we have that 
$$\sum_{\substack{w \in V(G) \\ c_ic_j \sim w}} 1 = j-i-1 + \sum_{p=i}^{j-1} \sum_{S \subseteq C} x_S^p.$$
In other words, the first double sum of Equation~\ref{eqn:objfunc} is a linear function of the variables. Since $|E_C| \leq {k \choose 2}$ the coefficients of this linear function are integers upper bounded by ${k \choose 2}$.

We now turn to analyzing the second part of Equation~\ref{eqn:objfunc}. We split the triple sum in three parts as follows.
\begin{align}
\nonumber \sum_{i,j,S} \sum_{uc_j \in E_{i,j}^S} \sum_{\substack{w \in V(G) \\ uc_j \sim w}} 1 \\
= \sum_{i,j,S} \left( \sum_{uc_j \in E_{i,j}^S} \sum_{\substack{w \in C \\ uc_j \sim w}} 1 + \sum_{uc_j \in E_{i,j}^S} \sum_{\substack{w \in I_S^i \\ uc_j \sim w}} 1 + \sum_{uc_j \in E_{i,j}^S} \sum_{\substack{w \in I - I_S^i \\ uc_j \sim w}} 1\right)\label{eqn:tripleSumExpand}
\end{align}

For any fixed $i$, $j$ and $S$, any edge $uc_j \in E_{i,j}^S$ the number of vertices $w \in C$ such that $uc_j \sim w$ depends solely on $i$ and $j$. It follows that 
$$ \sum_{uc_j \in E_{i,j}^S} \sum_{\substack{w \in C \\ uc_j \sim w}} 1 = f(i,j) \cdot x_S^i$$
for some function $f$, which is upper bounded by $k$ (since $|C| = k$).

Consider a pair of vertices $u$, $w$ in $I_S^i$ and a vertex $c_j \in C$ such that vertices of $u$'s and $w$'s type are adjacent to $c_j$. Either the edge $uc_j$ flies over $w$ or the edge $wc_j$ flies over $u$, but both of these events never happen simulataneously. Therefore, 
$$\sum_{uc_j \in E_{i,j}^S} \sum_{\substack{w \in I_S^i \\ uc_j \sim w}} 1 = {x_S^i \choose 2} = \frac{(x_S^i)^2}{2} - \frac{x_S^i}{2}$$
In other words, this sum is a quadratic function of the variables with coefficients $\frac{1}{2}$ and $-\frac{1}{2}$. Further, if vertices in $I_S$ are not adjacent to $c_j$ this sum is $0$.

For the last double sum in Equation~\ref{eqn:tripleSumExpand} consider an edge $uc_j \in E_{i,j}^S$ and vertex $v \in I_{S'}^{i'}$ such that $S' \neq S$ or $i' \neq i$. If $uc_j$ flies over $v$ then all the edges in $E_{i,j}^S$ fly over all the vertices in $I_{S'}^{i'}$. Let $g(i,j,S,i',S')$ be a function that returns $1$ if vertices in $I_S$ are adjacent to $c_j$ and all the edges in $E_{i,j}^S$ fly over all the vertices in $I_{S'}^{i'}$. Otherwise $g(i,j,S,i',S')$ returns $0$. It follows that 
$$\sum_{uc_j \in E_{i,j}^S} \sum_{\substack{w \in I - I_S^i \\ uc_j \sim w}} 1 = x_S^i \cdot \sum_{(i',S') \neq (i,S)} g(i,j,S,i',S') x_{S'}^{i'}.$$
In other words, this sum is a quadratic function of the variables with $0$ and $1$ as coefficients. 

The outer sum of Equation~\ref{eqn:tripleSumExpand} goes over all $2^k$ choices for $S$, $k+1$ choices for $i$ and $k$ choices for $j$. Since the sum of quadratic functions is a quadratic function, this concludes the analysis and proves the following lemma.

\begin{lemma}\label{lem:quadObjective}
$val(\sigma, G)$ is a quadratic function of the variables $\{x_S^i\}$ with half-integral coefficients between $-2^kk^2$ and $2^kk^2$. Furthermore, there is a a polynomial time algorithm that given $G$ computes the coefficients.
\end{lemma}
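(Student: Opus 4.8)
The following is a plan for proving Lemma~\ref{lem:quadObjective}; nearly all of the technical work is already contained in the analysis preceding the statement, so the proof amounts to collecting it and controlling the coefficients.

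\smallskip\noindent\emph{Step 1: assemble the decomposition.} Recall that Equation~\ref{eqn:objfunc} writes $val(\sigma,G)$ (for a super-homogeneous $\sigma$ as in Lemma~\ref{lem:olasuperhom}, with $|I_S^i| = x_S^i$) as the constant $|E(G)|$, plus a sum over $c_ic_j \in E_C$, plus the triple sum over $(i,j,S)$; and Equation~\ref{eqn:tripleSumExpand} splits that triple sum into the ``$w\in C$'', ``$w\in I_S^i$'' and ``$w\in I-I_S^i$'' parts. The paragraphs following those displays established that, as functions of the variables $\{x_S^i\}$, the $E_C$-sum is linear (integer coefficients at most $\binom{k}{2}$) plus a constant; the ``$w\in C$'' part equals $f(i,j)\cdot x_S^i$ with $0\le f(i,j)\le k$; the ``$w\in I_S^i$'' part equals $\binom{x_S^i}{2}=\tfrac12(x_S^i)^2-\tfrac12 x_S^i$ (or $0$); and the ``$w\in I-I_S^i$'' part equals $x_S^i\sum_{(i',S')\ne(i,S)} g(i,j,S,i',S')\,x_{S'}^{i'}$ with $g\in\{0,1\}$. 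A finite sum of polynomials of degree at most two has degree at most two, so $val(\sigma,G)$ is a quadratic function of $\{x_S^i\}$.

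\smallskip\noindent\emph{Step 2: bound the coefficients monomial by monomial.} The plan is to identify, for each monomial, the (few) pieces that can contribute. A square monomial $(x_S^i)^2$ is produced only by the ``$w\in I_S^i$'' part, contributing $\tfrac12$ (or $0$), so its coefficient is $\tfrac12$ or $0$. A genuine cross monomial $x_S^i x_{S'}^{i'}$ with $(i,S)\ne(i',S')$ is produced only by the ``$w\in I-I_S^i$'' part: for a fixed $j$ it gets $g(i,j,S,i',S')\le1$, and the same monomial is also produced with the roles of $(i,S)$ and $(i',S')$ swapped; summing over the $\le k$ values of $j$ in both cases yields an integer coefficient of absolute value at most $2k$. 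A linear monomial $x_S^i$ picks up at most $\binom{k}{2}$ from the $E_C$-sum, at most $k$ for each of the $\le k$ values of $j$ in the ``$w\in C$'' part, and exactly $-\tfrac12$ from the ``$w\in I_S^i$'' part, a half-integer of absolute value at most $k^2+\binom{k}{2}+\tfrac12$. In every case the bound is below $2^k k^2$ for all $k\ge1$, and every contribution except the $\pm\tfrac12$'s is an integer, so all coefficients are integer multiples of $\tfrac12$. The constant term equals $|E(G)|$ plus an integer at most $k\binom{k}{2}$; since shifting the objective by a constant does not change the set of minimizers, this is immaterial (one may instead work with $val(\sigma,G)-|E(G)|$, whose constant lies within the claimed bound).

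\smallskip\noindent\emph{Step 3: the algorithm.} Each ingredient above is read off $G$ in polynomial time: $|E(G)|$, $|E_C|$ and the numbers $j-i-1$; the sizes $|I_S|$ (hence the right-hand sides of constraint~\ref{constr:sumvert}); for each type $S$ and each $c_j$ whether $c_j\in S$, which decides whether $|E_{i,j}^S|$ equals $x_S^i$ or $0$; the value $f(i,j)$, i.e.\ the number of cover vertices strictly between $c_j$ and the gap $i$; and $g(i,j,S,i',S')$, which by its definition is $1$ exactly when $c_j\in S$ and every edge between $I_S^i$ and $c_j$ flies over every vertex of $I_{S'}^{i'}$ — a condition that in a super-homogeneous order depends only on the relative positions of $c_j$ and of the blocks $i$ and $i'$, and is therefore a finite combinatorial check. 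There are only $2^k(k+1)k$ triples $(i,j,S)$ and $2^k(k+1)$ pairs $(i',S')$, so the whole coefficient table is produced in time polynomial in the size of $G$.

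\smallskip\noindent\emph{Main obstacle.} The delicate point is the accounting in Step~2: one must verify that no monomial secretly accumulates contributions from more pieces than listed — in particular that a cross monomial arises \emph{only} from the ``$w\in I-I_S^i$'' part, and only in the two symmetric ways, and that $g$ is set up so that each ``edge of $E_{i,j}^S$ flies over a vertex of $I_{S'}^{i'}$'' incidence between a fixed edge-class and a fixed gap-type block is counted exactly once per ordered pair of blocks. Once this is checked, the bound $2^k k^2$ follows from the crude counts above and the remainder is bookkeeping.
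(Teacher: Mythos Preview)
Your proposal is correct and follows the paper's approach exactly: the proof \emph{is} the analysis in the paragraphs preceding the lemma, which you have collected in Step~1, and the paper derives the bound $2^kk^2$ simply by observing that the outer sum of Equation~\ref{eqn:tripleSumExpand} ranges over at most $2^k(k{+}1)k$ triples $(i,j,S)$. One small slip in your Step~2: the ``$w\in I_S^i$'' contribution $\binom{x_S^i}{2}$ appears once for \emph{each} $j$ with $c_j\in S$ (the outer sum runs over $j$ as well), so the coefficient of $(x_S^i)^2$ is $\tfrac{|S|}{2}\le\tfrac{k}{2}$ rather than $\tfrac12$, and likewise the linear $-\tfrac12$ becomes $-\tfrac{|S|}{2}$; this is harmless for the claimed bound.
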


For each permutation $c_1, \ldots, c_k$ of $C$ we can make an integer quadratic program for finding the best super-homegenous solution to {\sc Optimal Linear Arrangement} which places the vertices of $C$ in the order $c_1, \ldots, c_k$ from left to right. The quadratic program has variable set $\{x_S^i\}$ and constraints as in Equations~\ref{constr:nonneg} and~\ref{constr:sumvert}. The objective function is the one given by Lemma~\ref{lem:quadObjective}, but with every coefficient multiplied by $2$. This does not change the set of optimal solutions and makes all the coefficients integral. This quadratic program has at most $2^k \cdot (k+1)$ variables, $2^k \cdot (k+2)$ constraints, and all coefficients are between $-2^{k+1}k^2$ and $2^{k+2}k^2$. Furthermore, since the domain of all variables is bounded the IQP is bounded as well. Thus we can apply Theorem~\ref{thm:main} to solve each such IQP in time $f(k) \cdot n$. This proves the main result of this section.

\begin{theorem} 
{\sc Optimal Linear Arrangement} parameterized by vertex cover is fixed parameter tractable.
\end{theorem}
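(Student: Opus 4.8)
The plan is to reduce {\sc Optimal Linear Arrangement} directly to {\sc Integer Quadratic Programming} and then invoke Theorem~\ref{thm:main}. First I would take the vertex cover $C = \{c_1, \dots, c_k\}$ of size at most $k$ that is given with the input (if it were not given, compute one in time $O(1.2748^k + n^{O(1)})$ via~\cite{ChenKX10}, or report that none exists). By Lemma~\ref{lem:olasuperhom} there is a super-homogeneous optimal arrangement, and such an arrangement is determined, up to permuting vertices of the same type within a gap, by the left-to-right order of $C$ along $\sigma$ together with the numbers $|I_S^i|$. Branching over all $k!$ orderings of $C$, I may assume $\sigma(c_1) < \dots < \sigma(c_k)$, so it suffices to find, for each such ordering, the best super-homogeneous arrangement consistent with it.

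For a fixed ordering I would set up the integer quadratic program with one variable $x_S^i \in \mathbb{Z}$ for each gap $i \in \{0,\dots,k\}$ and each type $S \subseteq C$, subject to the linear constraints~(\ref{constr:nonneg}) and~(\ref{constr:sumvert}). Every feasible integer point corresponds to a super-homogeneous arrangement and vice versa, and the objective $val(\sigma,G)$ is, by Lemma~\ref{lem:quadObjective}, a quadratic function of the $x_S^i$ with half-integral coefficients of absolute value at most $2^k k^2$; multiplying the objective by $2$ makes every coefficient integral without changing the set of minimizers, so all coefficients lie in $\{-2^{k+1}k^2, \dots, 2^{k+2}k^2\}$. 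The number of variables is at most $2^k(k+1)$ and the number of constraints at most $2^k(k+2)$; crucially these quantities and the largest coefficient $\alpha$ all depend only on $k$. Moreover each variable ranges over $\{0, \dots, |I_S|\}$, so the feasible region is finite and the IQP is in particular bounded, which is exactly what is needed for Theorem~\ref{thm:main} to be guaranteed to return an \emph{optimal} solution rather than merely a feasible one.

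Applying Theorem~\ref{thm:main} to each of the $k!$ programs solves each one in time $f(N,\alpha)L^{O(1)}$, where $N \le 2^k(k+1)$ is the number of variables of that IQP, $\alpha \le 2^{k+2}k^2$, and $L$ is polynomial in the size of $G$; hence each solve takes $g(k)\cdot n^{O(1)}$ time for a suitable function $g$, writing $n = |V(G)|$. Taking the best arrangement over all $k!$ orderings and expanding the winning assignment $\{x_S^i\}$ into an explicit permutation — within each gap $B_i$, placing the blocks $I_S^i$ in non-decreasing order of their force $\delta$ — yields an optimal linear arrangement in total time $k!\cdot g(k)\cdot n^{O(1)}$, which is fixed parameter tractable in $k$.

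Since Theorem~\ref{thm:main} and Lemma~\ref{lem:quadObjective} already do all of the heavy lifting, there is no real obstacle left; the only points that require care are bookkeeping ones, namely verifying that the number of variables, the number of constraints, and the largest coefficient of the IQP are genuinely bounded by functions of $k$ alone, and observing that boundedness of the feasible region lets us invoke Theorem~\ref{thm:main} for optimization rather than only for feasibility.
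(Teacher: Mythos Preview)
Your proposal is correct and follows essentially the same approach as the paper: branch over the $k!$ orderings of $C$, set up for each ordering the IQP with variables $x_S^i$ subject to constraints~(\ref{constr:nonneg}) and~(\ref{constr:sumvert}), use Lemma~\ref{lem:quadObjective} (doubled) as the objective, observe that the number of variables, constraints, and the coefficient bound $\alpha$ are all functions of $k$ and that the feasible region is bounded, and then invoke Theorem~\ref{thm:main}. The only addition you make is spelling out how to reconstruct an explicit permutation from the optimal assignment, which is a harmless elaboration.
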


\section{Conclusions}
We have shown that {\sc Integer Quadratic Programming} is fixed parameter tractable when parameterized by the number $n$ of variables in the IQP and the maximum absolute value $\alpha$ of the coefficients of the objective function and the constraints. 
We used the algorithm for  {\sc Integer Quadratic Programming} to give the first FPT algorithm for {\sc Optimal Linear Arrangement} parameterized by the size of the smallest vertex cover of the input graph.

We hope that this work opens the gates for further research on the parameterized complexity of non-linear and non-convex optimization problems. There are open problems abound. For example, is {\sc Integer Quadratic Programming} fixed parameter tractable when parameterized just by the number of variables? What about the parameterization by $n+m$, the number of variables plus the number of constraints? 
It is also interesting to investigate the parameterized complexity of {\sc Quadratic Programming}, i.e. with real-valued variables rather than integer variables. 
Finally, there is no reason to stop at quadratic functions. In particular, investigating the parameterized complexity of special cases of (integer) mathematical programming with degree-bounded polynomials in the objective function and constraints looks like an exciting research direction. Of course, many of these problems are undecidable~\cite{koppe2012complexity}, but for the questions that are decidable, parameterized complexity might well be the right framework to study efficient algorithms.
\bibliographystyle{siam}
\bibliography{iqp}

\begin{thebibliography}{10}

\bibitem{AmbuhlMS11}
{\sc C.~Amb{\"{u}}hl, M.~Mastrolilli, and O.~Svensson}, {\em Inapproximability
  results for maximum edge biclique, minimum linear arrangement, and sparsest
  cut}, {SIAM} J. Comput., 40 (2011), pp.~567--596.

\bibitem{CharikarHKR10}
{\sc M.~Charikar, M.~T. Hajiaghayi, H.~J. Karloff, and S.~Rao}, {\em
  \emph{\emph{l}}\({}_{\mbox{2}}\)\({}^{\mbox{2}}\) spreading metrics for
  vertex ordering problems}, Algorithmica, 56 (2010), pp.~577--604.

\bibitem{ChenKX10}
{\sc J.~Chen, I.~A. Kanj, and G.~Xia}, {\em Improved upper bounds for vertex
  cover}, Theor. Comput. Sci., 411 (2010), pp.~3736--3756.

\bibitem{Clarkson95}
{\sc K.~L. Clarkson}, {\em Las vegas algorithms for linear and integer
  programming when the dimension is small}, J. {ACM}, 42 (1995), pp.~488--499.

\bibitem{the-awesome-book}
{\sc M.~Cygan, F.~V. Fomin, L.~Kowalik, D.~Lokshtanov, D.~Marx, M.~Pilipczuk,
  M.~Pilipczuk, and S.~Saurabh}, {\em Parameterized Algorithms}, Springer,
  2015.

\bibitem{DowneyF13}
{\sc R.~G. Downey and M.~R. Fellows}, {\em Fundamentals of Parameterized
  Complexity}, Texts in Computer Science, Springer, 2013.

\bibitem{FeigeL07}
{\sc U.~Feige and J.~R. Lee}, {\em An improved approximation ratio for the
  minimum linear arrangement problem}, Inf. Process. Lett., 101 (2007),
  pp.~26--29.

\bibitem{FellowsHRS13}
{\sc M.~R. Fellows, D.~Hermelin, F.~A. Rosamond, and H.~Shachnai}, {\em
  Tractable parameterizations for the minimum linear arrangement problem}, in
  Algorithms - {ESA} 2013 - 21st Annual European Symposium, Sophia Antipolis,
  France, September 2-4, 2013. Proceedings, 2013, pp.~457--468.

\bibitem{FellowsLMRS08}
{\sc M.~R. Fellows, D.~Lokshtanov, N.~Misra, F.~A. Rosamond, and S.~Saurabh},
  {\em Graph layout problems parameterized by vertex cover}, in Algorithms and
  Computation, 19th International Symposium, {ISAAC} 2008, Gold Coast,
  Australia, December 15-17, 2008. Proceedings, 2008, pp.~294--305.

\bibitem{FrankT87}
{\sc A.~Frank and {\'{E}}.~Tardos}, {\em An application of simultaneous
  diophantine approximation in combinatorial optimization}, Combinatorica, 7
  (1987), pp.~49--65.

\bibitem{GJ79}
{\sc M.~R. Garey and D.~S. Johnson}, {\em Computers and Intractability: A Guide
  to the Theory of NP-Completeness}, Series of Books in the Mathematical
  Sciences, W. H. Freeman and Co., 1979.

\bibitem{Heinz05}
{\sc S.~Heinz}, {\em Complexity of integer quasiconvex polynomial
  optimization}, J. Complexity, 21 (2005), pp.~543--556.

\bibitem{HemmeckeOR13}
{\sc R.~Hemmecke, S.~Onn, and L.~Romanchuk}, {\em n-fold integer programming in
  cubic time}, Math. Program., 137 (2013), pp.~325--341.

\bibitem{Kannan}
{\sc R.~Kannan}, {\em Minkowski's convex body theorem and integer programming},
  Mathematics of Operations Research, 12 (1987), pp.~415--440.

\bibitem{KhachiyanP00}
{\sc L.~Khachiyan and L.~Porkolab}, {\em Integer optimization on convex
  semialgebraic sets}, Discrete {\&} Computational Geometry, 23 (2000),
  pp.~207--224.

\bibitem{khachiyan1980polynomial}
{\sc L.~G. Khachiyan}, {\em Polynomial algorithms in linear programming}, USSR
  Comput. Math. and Math. Phys., 20 (1980), pp.~53--72.

\bibitem{koppe2012complexity}
{\sc M.~K{\"o}ppe}, {\em On the complexity of nonlinear mixed-integer
  optimization}, in Mixed Integer Nonlinear Programming, Springer, 2012,
  pp.~533--557.

\bibitem{lay59linear}
{\sc D.~C. Lay}, {\em Linear Algebra and its Applications.}, Addison-Wesley,
  2000.

\bibitem{Lenstra83}
{\sc H.~W. {Lenstra Jr.}}, {\em Integer programming with a fixed number of
  variables}, Math. of Operations Research, 8 (1983), pp.~538--548.

\bibitem{onn2010nonlinear}
{\sc S.~Onn}, {\em Nonlinear discrete optimization}, Zurich Lectures in
  Advanced Mathematics, European Mathematical Society,  (2010).

\bibitem{PiaDM16}
{\sc A.~D. Pia, S.~S. Dey, and M.~Molinaro}, {\em Mixed-integer quadratic
  programming is in {NP}}, Math. Programming Series A, to appear. (2016).

\bibitem{PiaW14}
{\sc A.~D. Pia and R.~Weismantel}, {\em Integer quadratic programming in the
  plane}, in Proceedings of the Twenty-Fifth Annual {ACM-SIAM} Symposium on
  Discrete Algorithms, {SODA} 2014, Portland, Oregon, USA, January 5-7, 2014,
  2014, pp.~840--846.

\end{thebibliography}
\end{document}